\theoremstyle{definition}
\theoremstyle{remark}
\theoremstyle{plain}
\newtheorem{theorem}{Theorem}
\newtheorem{lemma}{Lemma}
\newtheorem{proposition}{Proposition}
\newtheorem{corollary}{Corollary}
\begin{document}
\title{On the Performance of Pinching-Antenna Systems (PASS) with Orthogonal and Non-Orthogonal Multiple Access}

\author{Yanyu~Cheng,~\IEEEmembership{Senior Member,~IEEE},
Chongjun~Ouyang,~\IEEEmembership{Member,~IEEE},
Yuanwei~Liu,~\IEEEmembership{Fellow,~IEEE},
and~George~K.~Karagiannidis,~\IEEEmembership{Fellow,~IEEE}

\thanks{
Yanyu~Cheng is with the School of Cyberspace, Hangzhou Dianzi University, Hangzhou 310018, China (e-mail: yycheng@hdu.edu.cn).

Chongjun~Ouyang is with the School of Electronic Engineering and Computer Science, Queen Mary University of London, London E1 4NS, U.K. (e-mail: c.ouyang@qmul.ac.uk).

Yuanwei~Liu is with the Department of Electrical and Electronic Engineering, The University of Hong Kong, Hong Kong (email: yuanwei@hku.hk).

George~K.~Karagiannidis is with the Department of Electrical and Computer Engineering, Aristotle University of Thessaloniki, Thessaloniki 541 24, Greece (e-mail: geokarag@auth.gr).
}}
\maketitle

\begin{abstract}
This paper conducts a comprehensive performance analysis for pinching-antenna systems (PASS) under both orthogonal multiple access (OMA) and non-orthogonal multiple access (NOMA) transmission.
Given the cost of waveguides, we consider a scenario where the waveguide is not deployed in all rooms, i.e., some users are beyond the line-of-sight (LoS) link service area of the PASS.
Specifically, we consider a PASS where a pinching antenna in one room serves two users located in separate rooms.
The wireless transmissions between the pinching antenna and the users are performed via LoS and non-line-of-sight (NLoS) links, respectively.
Closed-form expressions for the outage probabilities (OPs) of the two users are derived for the considered system.
Furthermore, asymptotic analyses in the high signal-to-noise ratio (SNR) regime are performed to reveal the achievable diversity orders.
Numerical simulations validate the accuracy of the theoretical analysis and show that: 1) compared with conventional antenna systems (CASS), the OP of the LoS user in PASS is significantly reduced for both OMA and NOMA schemes in the middle SNR regime and approaches zero as the SNR increases; 2) since the diversity orders of the NLoS user in CASS and PASS are the same, the movement of the pinching antenna has no significant effect on the OP of the NLoS user for the OMA and NOMA scenarios.
\end{abstract}

\begin{IEEEkeywords}
Non-Orthogonal Multiple Access (NOMA), Orthogonal Multiple Access (OMA), Pinching-Antenna Systems (PASS).
\end{IEEEkeywords}

\IEEEpeerreviewmaketitle

\section{Introduction}
Flexible antenna systems, such as fluid and movable antenna systems, have received considerable attention in recent years. These novel antenna types and array architectures not only intelligently reconfigure wireless channel conditions by adjusting the antenna positions, but also provide flexibility in controlling the effective end-to-end channel gain. As the most prominent example of flexible antenna systems, the reconfigurable intelligent surface (RIS) is typically deployed between the transmitter and receiver, where it achieves physical beamforming by adjusting the phase shifts of its passive reflective elements, thereby further extending the coverage of the signal source \cite{cheng2021downlink}.

Although existing flexible antenna systems have demonstrated the ability to improve communication performance in multiple dimensions, they are still subject to significant practical limitations \cite{liu2025pinching}. First, traditional fluid antennas and movable antennas limit antenna movement to a few wavelengths, which has an insignificant impact on large-scale path loss. Therefore, their ability to combat large-scale path loss is limited. Second, in existing flexible antenna systems, it is impractical to adjust the number of active antennas therein once the communication systems are established.

Due to the aforementioned issues, pinching-antenna systems (PASS) were proposed by NTT DOCOMO as an innovative alternative \cite{suzuki2022pinching}. PASS consist of a special transmission line called a dielectric waveguide. Similar to leaky wave antennas, electromagnetic waves are emitted within the waveguide by pinching small separate dielectric particles at specific points along its length \cite{liu2025pinching}. These dielectrics are typically attached to the tips of plastic clips or clothespins, forming a structure known as a pinching antenna (PA). The waveguide can be selectively activated at various points to create targeted coverage in specific areas, allowing highly flexible and dynamic antenna placement. When compression is released at a given location, signal emission stops immediately, similar to removing a clothespin from a clothesline.

In essence, PASS can be viewed as a specific implementation of the fluid-antenna \cite{wong2020fluid,wong2021fluid,wong2020performance} or movable-antenna \cite{jin2025general,zhu2025movable, shao20256d} concepts, which provides a more flexible and scalable solution than traditional architectures. In recognition of DOCOMO's original contribution \cite{suzuki2022pinching}, we refer to this technology as PASS throughout this paper. These unique properties have motivated several recent research studies.
Moreover, PASS is consistent with the emerging vision of surface-wave communication super-highways, which aims to reduce path loss and improve signal power delivery by leveraging in-waveguide propagation through reconfigurable waveguides \cite{wong2020vision}.

\subsection{Related Work}
\subsubsection{Flexible Antennas}
According to \cite{kirtania2020flexible}, flexible antennas offer advantages over conventional antennas such as lightweight design, flexibility, and adaptability to complex surfaces.
In \cite{yang2025flexible}, the authors showed that flexible antenna arrays (FAA) increase the degrees of freedom (DoF) by adjusting the position and orientation of the antennas, allowing them to better adapt to different communication scenarios.
In \cite{wong2020fluid}, the authors proposed the innovative concept of a fluid antenna system (FAS) and demonstrated theoretically that the FAS can outperform conventional multi-antenna maximum ratio combining (MRC) systems.
In \cite{wong2021fluid}, the authors introduced the fluid antenna multiple access (FAMA) architecture. Through rigorous mathematical analysis, it was shown that FAMA can significantly increase system capacity and support a larger number of users in spatially constrained environments.
In \cite{wong2020performance}, the authors investigated the capacity performance and channel dynamics of a FAS under extremely compact spatial conditions, providing theoretical foundations for efficient communication in future miniaturized devices.
As the most prominent example of flexible antenna systems, RIS technology improves spectral efficiency \cite{cheng2021non,cheng2023performance}. In \cite{wu2019intelligent}, it is demonstrated that a large number of passive elements in a RIS can independently reflect incident signals and adjust their phase, achieving passive beamforming without the need for radio frequency (RF) chains.

\subsubsection{Pinching Antennas}
When dealing with large-scale path loss, PA is emerging as a new research direction. In PASS, under specific scenarios, adjusting the antenna’s position can transform a non-line-of-sight (NLoS) link into a line-of-sight (LoS) link, thereby enhancing communication quality. This flexibility, combined with the structural simplicity and ease of implementation, made PASS a highly practical and promising approach for real-world deployment.

In \cite{docomo2021pinching}, the authors presented typical application scenarios of the PA, demonstrating its ability to mitigate signal blockage caused by obstacles and to establish communication links in NLoS environments.
In \cite{xu2025rate}, the authors analyzed the impact of antenna positioning on channel gain, path loss, and phase shift and proposed strategies to maximize downlink transmission efficiency and enhance the received signal strength at the user end.
In \cite{tyrovolas2025performance}, the authors derived closed-form expressions for the OP and ergodic rate of PASS, revealing that waveguide loss in extended configurations had a substantial impact on performance.
In \cite{ouyang2025array}, the authors analytically derived the array gain achievable by PASS and demonstrated that optimizing the number and spacing of antennas resulted in enhanced performance.

In \cite{tegos2025minimum}, the authors formulated the optimization problem for multi-user PA systems, jointly optimizing the PA positioning and resource allocation to maximize the minimum data rate across all users in the system.
In \cite{hou2025performance}, the authors optimized the positioning of multiple PA serving a single user, and showed that the optimal placement was asymmetrical and non-uniform, significantly enhancing the system’s average sum capacity.

\subsubsection{PASS with OMA and NOMA}
From a performance analysis perspective, the authors investigated the scenario of a single PA and a single waveguide to address the key question of whether increasing the number of PAs can improve system performance in \cite{ding2025flexible}. By introducing orthogonal multiple access (OMA) and incorporating multiple PAs, they showed that user data rates could be significantly improved in a time-division multiple access (TDMA) system.
Similarly, in \cite{yang2025pinching}, the authors focused on a single PA system serving multiple users using OMA. Through performance analysis, it is found that PASS provides better performance compared to conventional antenna systems when users move along the $y$-axis. The authors investigated three different scenarios: OMA with a single PA, non-orthogonal multiple access (NOMA) with a single PA, and OMA with multiple PAs, and they proposed a low-complexity solution to maximize user data rates under these settings \cite{xie2025low}.

The flexibility of the PA allows the establishment of LoS links, thereby mitigating the effects of channel state information (CSI) imperfections, which can enhance NOMA performance.
In \cite{ding2025flexible}, the authors extended NOMA to a single waveguide, multi-PASS by using superposition coding to transmit superimposed user signals.
Theoretical analysis showed that the performance of a PASS using NOMA exceeds that of an antenna system using OMA.
Similarly, in \cite{yang2025pinching}, the authors discussed the advantages of NOMA-based PASS in multi-user environments. It was emphasized that when the number of waveguides is smaller than the number of users, the introduction of NOMA significantly improves spectrum efficiency. In addition, PA's ability to dynamically adapt to channel conditions facilitates NOMA's power allocation strategy.
To further optimize NOMA-assisted PASS, a low-complexity antenna activation algorithm based on matching theory was proposed in \cite{wang2025antenna} by converting the antenna positions from continuous variables to discrete variables to maximize the system sum rate. They formulated the antenna activation problem as a one-dimensional one-to-one matching problem and introduced a low-complexity iterative algorithm based on matching theory to maximize the system sum rate.

\subsection{Motivation and Contributions}
Recent pioneering research has demonstrated that the performance of PASS is superior to that of conventional fixed-position antenna systems for both OMA and NOMA scenarios \cite{ding2025flexible,yang2025pinching,xie2025low}.
It should be noted that the performance analysis of new technologies is crucial, and most current studies focus on the analysis of the ergodic rate of adaptive-rate systems \cite{wang2025antenna,hou2025performance,ding2025flexible}, lacking the analysis of the outage probability (OP) of fixed-rate systems.
Furthermore, due to the cost of waveguides, it is not possible to implement waveguides in all rooms, i.e., some users are beyond the LoS link service area of PASS.
This motivates us to perform a comprehensive outage performance analysis of the PASS with OMA and NOMA, where not all users are within the LoS link service area.

Specifically, we study a three-dimensional (3D) indoor communication system, where a room is equipped with a dielectric waveguide installed on the ceiling, as shown in Figure \ref{fig_system_model}. A base station (BS) serves two users located in separate and mutually obstructed rooms, which have different LoS and NLoS propagation characteristics. We analyze the system performance under two multiple access techniques: OMA and NOMA. The main contributions of this study are summarized as follows:
\begin{itemize}
    \item A comprehensive performance analysis is performed for CASS and PASS with NOMA and OMA scenarios. In particular, the OPs and diversity orders are analyzed for all cases.
    \item Considering the uniform distribution of users and the small-scale fading of the NLoS path, we derive the closed-form expressions for the OPs of all cases.
    \item To gain further insight, we derive the asymptotic approximations of the OPs at high signal-to-noise ratio (SNR) to obtain diversity orders.
    \item The analytical and simulation results show that the PASS exhibit superior performance compared to CASS under both OMA and NOMA scenarios. In particular, compared to CASS, the performance of the LoS user in PASS is significantly improved, while the performance of the NLoS user is slightly degraded.
\end{itemize}

The structure of this paper is as follows: In Section \ref{Sec_System_model}, the system model is introduced systematically. In Section \ref{Sec_CASS}, we analyze the performance of CASS, and in Section \ref{Sec_PASS}, we further analyze the performance of PASS.
Numerical results are given in Section \ref{Sec_Numerical_Results}. Finally, in Section \ref{Sec_Conclusion}, the conclusion of this work is drawn.

\section{System Model}\label{Sec_System_model}
\begin{figure}
\centering
\includegraphics[width=\linewidth]{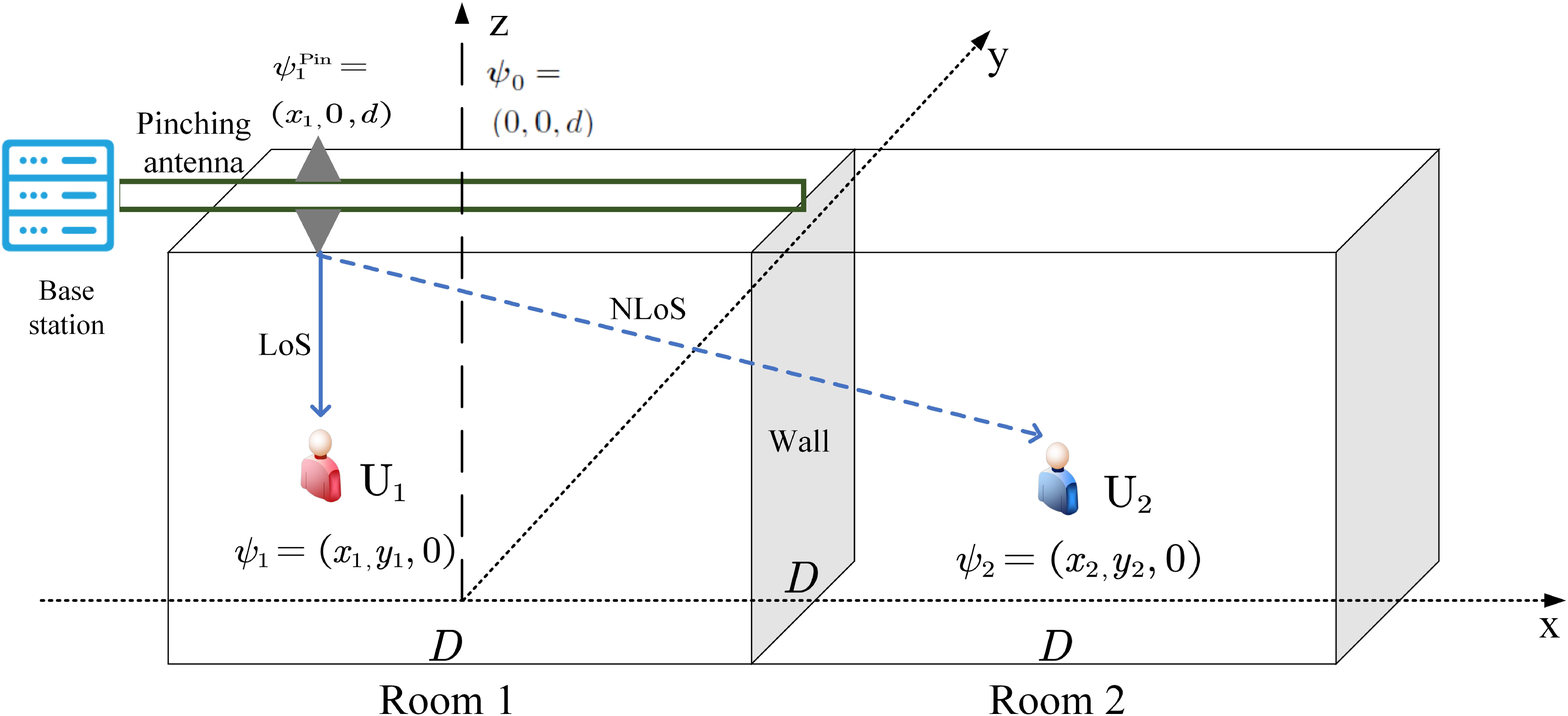}
\caption{The considered PASS with two users.}
\label{fig_system_model}
\end{figure}
Consider a wireless communication system in which a BS serves two single-antenna users, denoted by $\mathrm{U}_1$ and $\mathrm{U}_2$.
$\mathrm{U}_1$ and $\mathrm{U}_2$ are uniformly distributed in room $1$ and room $2$, respectively, where both rooms are squares with side length $D$.
To conveniently describe the system model, we establish a three-dimensional Cartesian coordinate system as shown in Fig. \ref{fig_system_model}.
$\mathrm{U}_1$ and $\mathrm{U}_2$ are located at $\psi_1=[x_1, y_1, 0]$ and $\psi_2=[x_2, y_2, 0]$, respectively.
A waveguide is placed in room~$1$ and parallel to the $x$-axis, where the height of the waveguide is denoted by $d$.
Therefore, a PA deployed on the waveguide can serve $\mathrm{U}_1$ with an LoS link and $\mathrm{U}_2$ with an NLoS link.

\subsection{Conventional Antenna Systems}
\subsubsection{CASS With OMA}
Without loss of generality, TDMA is used as an example of OMA, i.e., $\mathrm{U}_m$ is served in time slot $m$ ($m \in \{1, 2\}$).
The conventional antenna (CA) lacks installation flexibility, and it has to be deployed at a fixed location. For the considered scenario, a straightforward choice is to deploy the antenna at the center of the square, i.e., $\psi_0=[0, 0, d]$ shown in Fig. \ref{fig_system_model}, where $d$ denotes the height of the antenna. Therefore, according to the spherical-wave channel model\cite{ding2025flexible}, $\mathrm{U}_1$’s data rate is given by
\begin{equation}
\begin{split}
R_1^{Conv, OMA} = \frac{1}{2}\log_2\left(1+\frac{\eta P_1}{|\mathbf{\psi}_0-\mathbf{\psi}_1|^2 \sigma^2}\right),
\end{split}
\end{equation}
where $\eta = \frac{c^2}{16\pi^2f_c^2}$, $c$
denotes the speed of light, $f_c$ is the carrier frequency, $P_1$ is the transmit power for $\mathrm{U}_1$'s signal, $|\mathbf{\psi}_0-\mathbf{\psi}_m|$ denotes the distance between the CA and $\mathrm{U}_m$, $n_1$ and $n_2$ denote the additive white Gaussian noises (AWGNs) at $\mathrm{U}_1$ and $\mathrm{U}_2$, respectively, with the same variance $\sigma^2$.
On the other hand, $\mathrm{U}_2$’s data rate is given by
\begin{equation}
\begin{split}
R_2^{Conv, OMA} = \frac{1}{2}\log_2\left(1+\frac{P_2 |h_2|^2}{|\mathbf{\psi}_0-\mathbf{\psi}_2|^\alpha \sigma^2 }\right),
\end{split}
\end{equation}
where $P_2$ is the transmit power for $\mathrm{U}_2$'s signal, $h_2$ is the small-scale fading following the Rayleigh fading model, $\alpha$ is the path-loss exponent.

\subsubsection{CASS With NOMA}
The BS transmits the signal $x=\sqrt{\alpha_1 P_b}s_{1}+\sqrt{\alpha_2 P_b}s_{2}$, where $P_b$ denotes the transmit power at the BS, $s_{1}$ and $s_{2}$ denote the transmitted signals to $\mathrm{U}_1$ and $\mathrm{U}_2$, respectively, $\alpha_1$ and $\alpha_2$ denote the power allocation coefficients for $\mathrm{U}_1$ and $\mathrm{U}_2$, respectively ($\alpha_1+ \alpha_2=1$).
Note that we assume the fixed power allocation between two users and set $\alpha_1 < \alpha_2$ for user fairness\cite{cheng2021non}.
The received signals at $\mathrm{U}_1$ and $\mathrm{U}_2$ are given by
\begin{equation}
\begin{split}
y_1^{Conv, NOMA}=\frac{\sqrt{\eta}}{|\mathbf{\psi}_0-\mathbf{\psi}_1|} x+n_1,
\end{split}
\end{equation}
and
\begin{equation}
\begin{split}
y_2^{Conv, NOMA}=\frac{|h_2|}{|\mathbf{\psi}_0-\mathbf{\psi}_2|^{\alpha/2}} x +n_2,
\end{split}
\end{equation}
respectively,
where $n_1$ and $n_2$ denote the AWGNs at $\mathrm{U}_1$ and $\mathrm{U}_2$, respectively, with the same variance $\sigma^2$.

At $\mathrm{U}_1$, the signal of $\mathrm{U}_2$ is detected first, and the corresponding data rate is given by
\begin{equation}
\begin{split}
R_{1,2}^{Conv, NOMA}=\log_2\left(1 + \frac{\alpha_2 \rho \eta |\mathbf{\psi}_0-\mathbf{\psi}_1|^{-2}}{\alpha_1 \rho \eta |\mathbf{\psi}_0-\mathbf{\psi}_1|^{-2} + 1}\right),
\end{split}
\end{equation}
where $\rho=P_b/\sigma_n^2$ denotes the transmit SNR at the BS.
After implementing the successive interference cancellation (SIC), the signal of $\mathrm{U}_1$ is decoded, and the corresponding data rate is given by
\begin{equation}
\begin{split}
R_1^{Conv, NOMA}=\log_2\left(1 + \alpha_1 \rho \eta |\mathbf{\psi}_0-\mathbf{\psi}_1|^{-2}\right).
\end{split}
\end{equation}
At $\mathrm{U}_2$, its signal is decoded directly by regarding $\mathrm{U}_1$'s signal as interference, and its data rate is given by
\begin{equation}
\begin{split}
R_2^{Conv, NOMA}=\log_2\left(1 + \frac{\alpha_2 \rho |h_2|^2 |\mathbf{\psi}_0-\mathbf{\psi}_2|^{-\alpha}}{\alpha_1 \rho |h_2|^2 |\mathbf{\psi}_0-\mathbf{\psi}_2|^{-\alpha} + 1}\right).
\end{split}
\end{equation}

\subsection{Pinching-Antenna Systems}
\subsubsection{PASS With OMA}
The key feature of PAs is their deployment flexibility. They can be moved on a scale much larger than a wavelength and deployed at the positions closest to the users.
This paper considers the case of a single PA. During the $m$-th time slot, $\mathrm{U}_m$ is served, and the PA is moved to the position closest to $\mathrm{U}_m$, denoted by $\mathbf{\psi}_m^{Pin}$.
\begin{equation}
\begin{split}
R_1^{Pin,OMA} = \frac{1}{2}\log_2\left(1+\frac{\eta P_1}{|\mathbf{\psi}_1^{Pin}-\mathbf{\psi}_1|^2 \sigma^2}\right),
\end{split}
\end{equation}
where $\eta = \frac{c^2}{16\pi^2f_c^2}$, $P_1$ is the transmit power for $\mathrm{U}_1$'s signal, and $|\mathbf{\psi}_1^{Pin}-\mathbf{\psi}_m|$ denotes the distance between the PA and $\mathrm{U}_m$.
On the other hand, $\mathrm{U}_2$’s data rate is given by
\begin{equation}
\begin{split}
R_2^{Pin,OMA} = \frac{1}{2}\log_2\left(1+\frac{P_2 |h_2|^2}{|\mathbf{\psi}_1^{Pin}-\mathbf{\psi}_2|^\alpha \sigma^2}\right),
\end{split}
\end{equation}
where $P_2$ is the transmit power for $\mathrm{U}_2$'s signal, $h_2$ is the small-scale fading following the Rayleigh fading model, $\alpha$ is the path-loss exponent.

\subsubsection{PASS With NOMA}
At $\mathrm{U}_1$, the signal of $\mathrm{U}_2$ is detected first, and the corresponding data rate is given by
\begin{equation}
\begin{split}
R_{1,2}^{Pin,NOMA}=\log_2\left(1 + \frac{\alpha_2 \rho \eta |\mathbf{\psi}_1^{Pin}-\mathbf{\psi}_1|^{-2}}{\alpha_1 \rho \eta |\mathbf{\psi}_1^{Pin}-\mathbf{\psi}_1|^{-2} + 1}\right).
\end{split}
\end{equation}
After implementing the successive interference cancellation (SIC), the signal of $\mathrm{U}_1$ is decoded, and the corresponding data rate is given by
\begin{equation}
\begin{split}
R_1^{Pin,NOMA}=\log_2\left(1 + \alpha_1 \rho \eta |\mathbf{\psi}_1^{Pin}-\mathbf{\psi}_1|^{-2}\right).
\end{split}
\end{equation}
At $\mathrm{U}_2$, its signal is decoded directly by regarding $\mathrm{U}_1$'s signal as interference, and its data rate is given by
\begin{equation}
\begin{split}
R_2^{Pin,NOMA}=\log_2\left(1 + \frac{\alpha_2 \rho |h_2|^2 |\mathbf{\psi}_1^{Pin}-\mathbf{\psi}_2|^{-\alpha}}{\alpha_1 \rho |h_2|^2 |\mathbf{\psi}_1^{Pin}-\mathbf{\psi}_2|^{-\alpha} + 1}\right).
\end{split}
\end{equation}

\section{Performance Analysis for CASS}\label{Sec_CASS}
In this section, we investigate the performance of CASS with OMA and NOMA. 
To realize this, the distributions of $|\psi_{0} - \psi_{1} |^{2}$ and $|\psi_{0} - \psi_{2} |^{2}$ are required.

\subsection{Distance Statistics for CASS}
\begin{lemma}\label{lemma-z1}
Denote that \( Z_{1} = | \psi_{0} - \psi_{1} |^{2} \), where $\psi_0 = [0, 0, d]$, \(\psi_1 = [x_1, y_1, 0]\), \(x_{1} \in \left[-\frac{D}{2}, \frac{D}{2}\right]\) and \(y_{1} \in \left[-\frac{D}{2}, \frac{D}{2}\right]\). Its probability density function (PDF) and cumulative distribution function (CDF) are given by
\begin{equation}\label{PDF-z1}
\begin{split}
f_{Z_{1}}(z) = 
\begin{cases} 
0, &z < d^2, \\ 
\frac{\pi}{D^2}, & d^2 \leq z < d^2 + \frac{D^2}{4}, \\ 
\frac{4 \arcsin \left( \frac{D}{2\sqrt{\zeta}}  - \pi \right)}{D^2}, & d^2 + \frac{D^2}{4} \leq z < d^2 + \frac{D^2}{2}, \\
0, & d^2 + \frac{D^2}{2} \leq z, 
\end{cases}
\end{split}
\end{equation}
and
\begin{equation}\label{CDF-Z1}
\begin{split}
F_{Z_{1}}(z) = 
\begin{cases} 
0, & z < d^2, \\ 
\frac{\pi \zeta}{D^2}, & d^2 \leq z < d^2 + \frac{D^2}{4}, \\ 
\frac{\hbar - \pi\zeta}{D^2}, & d^2 + \frac{D^2}{4} \leq z < d^2 + \frac{D^2}{2}, \\
1, & d^2 + \frac{D^2}{2} \leq z,
\end{cases}
\end{split}
\end{equation}
respectively, where $\zeta=z-d^2$, $\epsilon=z-d^2-\frac{D^2}{4}$, and $\hbar = 2D\sqrt{\epsilon} + 4\zeta \arcsin \left( \frac{D}{2\sqrt{\zeta}} \right)$.
\end{lemma}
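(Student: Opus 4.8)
The plan is to reduce the statement to a classical geometric-probability computation. Since $\psi_0=[0,0,d]$ and $\psi_1=[x_1,y_1,0]$, we have $Z_1 = x_1^2+y_1^2+d^2$, so, writing $\zeta = z-d^2$, the quantity $\zeta$ is the squared Euclidean distance from the origin to a point drawn uniformly from the square $[-\tfrac D2,\tfrac D2]^2$ of area $D^2$. Hence $F_{Z_1}(z) = \Pr(x_1^2+y_1^2 \le \zeta)$ equals $\tfrac{1}{D^2}$ times the area of the intersection of the disk of radius $\sqrt\zeta$ centered at the origin with that square. The shape of this intersection changes at the radius where the disk first pokes out through the edges ($\sqrt\zeta = D/2$, i.e. $\zeta = D^2/4$) and at the radius where it first swallows the corners ($\sqrt\zeta = D/\sqrt2$, i.e. $\zeta = D^2/2$), which is exactly what produces the four-branch structure in \eqref{CDF-Z1}.

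The two extreme branches are trivial: for $\zeta<0$ the event is impossible, and for $\zeta \ge D^2/2$ the disk contains the whole square (its farthest points are the corners, at squared distance $D^2/2$), giving $F_{Z_1}=0$ and $F_{Z_1}=1$ respectively. For $0 \le \zeta < D^2/4$ the disk lies entirely inside the square, so the intersection area is $\pi\zeta$ and $F_{Z_1}(z)=\pi\zeta/D^2$. The substantive branch is $D^2/4 \le \zeta < D^2/2$: here the disk overflows each of the four edges but does not reach any corner, so the four circular segments cut off by $x=\pm D/2$ and $y=\pm D/2$ are pairwise disjoint — a common point would have to satisfy $x^2+y^2 > D^2/2 \ge \zeta$. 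Thus the intersection area is $\pi\zeta$ minus four times the area $\zeta\arccos\big(\tfrac{D}{2\sqrt\zeta}\big) - \tfrac D2\sqrt{\zeta - D^2/4}$ of one segment (the standard circular-segment formula with half-chord distance $D/2$). Substituting $\arccos(u)=\tfrac\pi2-\arcsin(u)$ collapses this to $\tfrac1{D^2}\big(2D\sqrt{\zeta-D^2/4} + 4\zeta\arcsin(\tfrac{D}{2\sqrt\zeta}) - \pi\zeta\big)$, which is precisely $(\hbar-\pi\zeta)/D^2$ once we identify $\epsilon = \zeta - D^2/4$.

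The PDF in \eqref{PDF-z1} then follows by differentiating $F_{Z_1}$ in $z$, noting $\tfrac{d\zeta}{dz}=\tfrac{d\epsilon}{dz}=1$. The flat branch $\pi/D^2$ is immediate. For the middle branch the cleanest route is to differentiate the area geometrically: $\tfrac{d}{d\zeta}(\text{area})$ equals $\tfrac12$ times the angular measure of the arc of the circle of radius $\sqrt\zeta$ lying inside the square, which in this regime is $8\arcsin(\tfrac{D}{2\sqrt\zeta}) - 2\pi$ per full circle, yielding $f_{Z_1}(z) = \big(4\arcsin(\tfrac{D}{2\sqrt\zeta}) - \pi\big)/D^2$. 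Equivalently one differentiates the closed-form CDF directly, in which case the derivative of $2D\sqrt{\zeta-D^2/4}$ cancels the chain-rule term coming from $4\zeta\arcsin(\tfrac{D}{2\sqrt\zeta})$, leaving the same expression. As a consistency check, both branches agree at $\zeta=D^2/4$ (value $\pi/D^2$) and the density vanishes at $\zeta=D^2/2$ (since $\arcsin(1/\sqrt2)=\pi/4$), so $f_{Z_1}$ is continuous and supported on $[d^2, d^2+D^2/2]$, as required.

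The main obstacle is the middle regime: getting the circular-segment area right, rigorously justifying that the four segments do not overlap, and then either carrying out the $\arcsin$ differentiation carefully (watching the cancellation) or recognizing the slicker arc-length argument for the density.
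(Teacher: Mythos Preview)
Your argument is correct and complete. The paper itself states Lemma~\ref{lemma-z1} without proof, so there is no derivation to compare against; your geometric-probability approach (computing the area of the intersection of a disk with the square and splitting into regimes according to whether the circle crosses the edges and/or contains the corners) is the standard and natural one, and the segment-area computation and subsequent differentiation are carried out correctly, including the cancellation you flag.

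One remark worth making explicit in your write-up: the PDF expression printed in \eqref{PDF-z1} for the middle branch, $\frac{4\arcsin(\frac{D}{2\sqrt{\zeta}}-\pi)}{D^2}$, is a typographical slip in the paper; the $-\pi$ belongs outside the $\arcsin$, i.e.\ the density is $\big(4\arcsin(\tfrac{D}{2\sqrt{\zeta}})-\pi\big)/D^2$, exactly as your derivation produces. Your boundary checks at $\zeta=D^2/4$ and $\zeta=D^2/2$ confirm this is the intended formula.
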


\begin{lemma}\label{lemma-z2}
Denote that \( Z_{2} = | \psi_{0} - \psi_{2} |^{2} \), where \(\psi_0 = [0, 0, d]\), \(\psi_2 = [x_2, y_2, 0]\), \(x_{2} \in \left[\frac{D}{2}, \frac{3D}{2}\right]\), and \(y_{2} \in \left[-\frac{D}{2}, \frac{D}{2}\right]\). Its PDF and CDF are given by
\begin{equation}
\label{Ppf-z2}
\begin{split}
f_{Z_{2}}(z) = 
\begin{cases} 
0, &z < d^2+\frac{D^2}{4}, \\ 
\frac{\frac{\pi}{2}-\arcsin \left( \frac{D}{2\sqrt{\zeta}} \right)}{D^2}, &  d^2+\frac{D^2}{4}  \leq z < d^2 + \frac{D^2}{2}, \\ 
\frac{\frac{\pi}{2}-\arcsin \left( \sqrt{1-\frac{D^2}{4\zeta}} \right)}{D^2}, & d^2 + \frac{D^2}{2} \leq z < d^2 + \frac{9D^2}{4}, \\
\frac{\imath}{D^2}, & d^2 + \frac{9D^2}{4} \leq z < d^2 + \frac{5D^2}{2}, \\
0, & d^2 + \frac{5D^2}{2} \leq z, 
\end{cases}
\end{split}
\end{equation}
and
\begin{equation}\label{CDF-Z2}
\begin{split}
F_{Z_{2}}(z) = 
\begin{cases} 
0, &z < d^2+\frac{D^2}{4}, \\ 
\frac{1}{D^2}\left(\frac{\pi\zeta}{2} - \frac{\hbar}{4}\right),  &d^2+\frac{D^2}{4}  \leq z < d^2 + \frac{D^2}{2}, \\ 
\frac{\jmath}{D^2},&d^2 + \frac{D^2}{2} \leq z < d^2 + \frac{9D^2}{4}, \\
\frac{\imath}{2D^2},&d^2 + \frac{9D^2}{4} \leq z < d^2 + \frac{5D^2}{2} ,\\
1,& d^2 + \frac{5D^2}{2} \leq z ,
\end{cases}
\end{split}
\end{equation}
respectively, where $\imath = D\sqrt{\epsilon}-D^2+3D\sqrt{\epsilon-2D^2}+2\zeta \left(\arcsin\left( \frac{D}{2\sqrt{\zeta}} \right)+\arcsin\left(\frac{3D}{2\sqrt{\zeta}}\right)-\frac{\pi}{2}\right)$ and $\jmath =\frac{D\sqrt{\epsilon}-D^2+\pi\zeta}{2}-\zeta\arcsin \left( \sqrt{1-\frac{D^2}{4\zeta}} \right)$.
\end{lemma}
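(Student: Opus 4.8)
The plan is to convert the statement into a planar area computation. Since $\psi_2=[x_2,y_2,0]$ with $(x_2,y_2)$ uniformly distributed on the rectangle $\mathcal R=[\tfrac D2,\tfrac{3D}2]\times[-\tfrac D2,\tfrac D2]$ of area $D^2$, we have $Z_2=d^2+x_2^2+y_2^2$, so that with $\zeta:=z-d^2$,
\[
F_{Z_2}(z)=\Pr\{x_2^2+y_2^2\le\zeta\}=\frac{1}{D^2}\,\mathcal A\big(\sqrt\zeta\,\big),\qquad
\mathcal A(r):=\big|\{x^2+y^2\le r^2\}\cap\mathcal R\big| ,
\]
the area of the overlap between the origin-centred disk of radius $r$ and $\mathcal R$. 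Once $F_{Z_2}$ is in closed form, the PDF $f_{Z_2}$ follows by differentiation in $z$.

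First I would identify the critical values of $r^2$ at which the shape of this overlap changes, namely the squared distances from the origin to the geometric features of $\mathcal R$: the nearest point $(\tfrac D2,0)$ gives $r^2=\tfrac{D^2}4$; the near corners $(\tfrac D2,\pm\tfrac D2)$ give $r^2=\tfrac{D^2}2$; the nearest point of the far edge, $(\tfrac{3D}2,0)$, gives $r^2=\tfrac{9D^2}4$; and the far corners $(\tfrac{3D}2,\pm\tfrac D2)$ give $r^2=\tfrac{5D^2}2$. These match exactly the thresholds $d^2+\{\tfrac{D^2}4,\tfrac{D^2}2,\tfrac{9D^2}4,\tfrac{5D^2}2\}$ in \eqref{CDF-Z2}: for $\zeta$ below the first the disk misses $\mathcal R$ (so $F_{Z_2}=0$), and for $\zeta$ above the last it contains $\mathcal R$ (so $F_{Z_2}=1$).

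On each of the three intermediate ranges I would compute $\mathcal A(r)$ by vertical slicing,
\[
\mathcal A(r)=\int_{D/2}^{3D/2}2\min\!\Big(\tfrac D2,\ \sqrt{\max(r^2-x^2,\,0)}\,\Big)\,\mathrm dx .
\]
For $\tfrac{D^2}4\le\zeta<\tfrac{D^2}2$ the $\min$ never binds and this reduces to the circular segment $\int_{D/2}^{\sqrt\zeta}2\sqrt{\zeta-x^2}\,\mathrm dx$; for $\tfrac{D^2}2\le\zeta<\tfrac{9D^2}4$ the integral splits at $x=\sqrt{\zeta-\tfrac{D^2}4}$, where the arc meets $y=\pm\tfrac D2$, with the integrand vanishing before $x=\tfrac{3D}2$ (since $\sqrt\zeta<\tfrac{3D}2$); and for $\tfrac{9D^2}4\le\zeta<\tfrac{5D^2}2$ it splits at the same point but is cut off at $x=\tfrac{3D}2$, the arc having already crossed the far edge. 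Each piece is elementary via $\int 2\sqrt{r^2-x^2}\,\mathrm dx=x\sqrt{r^2-x^2}+r^2\arcsin\tfrac xr$; substituting $r^2=\zeta$ and $\epsilon=\zeta-\tfrac{D^2}4$ (so that $\zeta-\tfrac{9D^2}4=\epsilon-2D^2$) and using $\arccos u=\tfrac\pi2-\arcsin u$ and $\arcsin\sqrt{1-u^2}=\tfrac\pi2-\arcsin u$ collapses $\mathcal A(r)/D^2$ into $\tfrac{\pi\zeta}{2D^2}-\tfrac{\hbar}{4D^2}$, $\tfrac{\jmath}{D^2}$ and $\tfrac{\imath}{2D^2}$, which is \eqref{CDF-Z2}. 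Differentiating $F_{Z_2}$ in $z$ then gives \eqref{Ppf-z2}; on each piece the derivatives of the $\sqrt{\,\cdot\,}$ terms cancel against the chain-rule contributions of the inverse-trigonometric terms, leaving only the $\arcsin$/$\arccos$ terms, which is a convenient internal check.

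The main obstacle is the geometric case analysis rather than any single integral: for each range of $\zeta$ one must correctly read off which edges of $\mathcal R$ the circular arc intersects, hence the correct limits of integration — the delicate step being the transition at $\zeta=\tfrac{9D^2}4$, past which the arc spills over the far edge $x=\tfrac{3D}2$ and the slicing integral picks up the hard cutoff there. Everything downstream is routine evaluation of elementary integrals and inverse-trigonometric simplification, entirely parallel to Lemma~\ref{lemma-z1}.
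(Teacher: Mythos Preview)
Your approach is correct and is precisely the natural way to establish the result: reduce $F_{Z_2}$ to the normalised area of the intersection of the disk $\{x^2+y^2\le\zeta\}$ with the off-centre square $[\tfrac D2,\tfrac{3D}2]\times[-\tfrac D2,\tfrac D2]$, identify the four critical radii from the squared distances to the near edge midpoint, near corners, far edge midpoint, and far corners, slice vertically, and differentiate. Your case analysis is accurate, including the transition at $\zeta=\tfrac{9D^2}{4}$ where the upper limit of the circular-segment integral switches from $\sqrt\zeta$ to the hard cutoff $\tfrac{3D}{2}$.

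The paper itself states Lemma~\ref{lemma-z2} without proof; no derivation is supplied in the text or appendices, so there is nothing to compare against beyond noting that your argument reproduces the claimed piecewise forms (and, as you observe, is entirely parallel to the computation underlying Lemma~\ref{lemma-z1}).
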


\subsection{CASS With OMA}
The OPs of  $\mathrm{U}_{1}$ and $\mathrm{U}_{2}$ are given by
\begin{equation}\label{Conv-U1-OMA}
\begin{split}
 \mathbb{P}_{1}^{Conv,OMA}&=\Pr( R_{1}^{Conv,OMA}<\bar{R}),
\end{split}
\end{equation}
and
\begin{equation}\label{Conv-U2-OMA}
\begin{split}
\mathbb{P}_{2}^{Conv,OMA}&=\Pr( R_{2}^{Conv,OMA}<\bar{R}),
\end{split}
\end{equation}
respectively, where $\bar{R}$ is the target date rate.

Based on the distance statistics in Lemma \ref{lemma-z1} and Lemma \ref{lemma-z2}, the OPs of $\mathrm{U}_{1}$ and $\mathrm{U}_{2}$ are derived in the following theorem.
\begin{theorem}\label{theorem-conv-oma}
In CASS with OMA, the OPs of $\mathrm{U}_{1}$ and $\mathrm{U}_{2}$ are given by
\begin{equation}\label{Conv-U1-OMA-OP}
\begin{split}
 \mathbb{P}_{1}^{Conv,OMA}&=1-\mathbb{I}_{1},
\end{split}
\end{equation}
and
\begin{equation}\label{Conv-U2-OMA-OP}
\begin{split}
 \mathbb{P}_{2}^{Conv,OMA}&= 1-\frac{1}{D^2} (\mathbb{J}_{1} +\mathbb{Q}_{1}  + \mathbb{K}_{1}),
\end{split}
\end{equation}
respectively, where $\mathbb{I}_{1}=F_{Z_{1}}(a)$, $a=\frac{\eta \rho}{2^{M\bar{R}}-1}$, $\mathbb{J}_{1}=\sum_{i=1}^{n} \omega_i \mathbf{j}_{1}(t_i)$,  $\mathbb{Q}_{1}=\sum_{i=1}^{n} \omega_i \mathbf{q}_{1}(t_i)$, $\mathbb{K}_{1}=\sum_{i=1}^{n} \omega_i \mathbf{k}_{1}(t_i)$, $n$ is the number of nodes for the Chebyshev-Gauss quadrature\footnote{$n$ is a trade-off parameter of accuracy and complexity. The approximation accuracy increases as $n$ increases.},
$\omega_i = \frac{\pi}{n}$ is the weight, $t_i = \cos \left( \frac{2i - 1}{2n} \pi \right)$, 
$\mathbf{j}_{1}(t) = \frac{D^2}{8} e^{-\varpi_{1} (\frac{D^2}{8} t + \frac{3D^2}{8})^{\frac{\alpha}{2}}} \left( \frac{\pi}{2} - \arcsin \left( \sqrt{\frac{2}{t + 3}} \right) \right) \sqrt{1 - t^2}$, 
$\mathbf{q}_{1}(t)=\frac{7D^{2}}{8} e^{-\varpi_{1}\left(\frac{7D^{2}}{8}t + \frac{11D^{2}}{8}\right)^{\frac{\alpha}{2}}} \Big(\frac{\pi}{2} - \arcsin\left(\sqrt{\frac{7t + 9}{7t + 11}}\right)\Big)\\
\times \sqrt{1 - t^{2}}$,
$\mathbf{k}_{1}(t)=\frac{D^{2}}{8} e^{-\varpi_{1}\left(\frac{D^{2}}{8}t + \frac{19D^{2}}{8}\right)^{\frac{\alpha}{2}}}\Big(\arcsin\left(3\sqrt{\frac{2}{t + 19}}\right)\\
- \frac{\pi}{2} +  \arcsin\left(\sqrt{\frac{2}{t + 19}}\right) \Big)  \sqrt{1 - t^{2}}$, $ \varpi_{1}=\frac{(2^{M\bar{R}}-1)}{\rho}$, and we assume that $\rho=P_1/\sigma^2=P_2/\sigma^2$.
\end{theorem}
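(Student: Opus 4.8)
The plan is to reduce each outage event to a statement about the squared link distances $Z_1=|\psi_0-\psi_1|^2$ and $Z_2=|\psi_0-\psi_2|^2$, whose laws are supplied by Lemma~\ref{lemma-z1} and Lemma~\ref{lemma-z2}, and---when the Rayleigh coefficient $h_2$ is also present---to average over the fading first and then invoke the distance statistics. For $\mathrm{U}_1$, I start from $R_1^{Conv,OMA}<\bar R$, i.e. $\tfrac12\log_2\!\big(1+\eta P_1/(Z_1\sigma^2)\big)<\bar R$, and solve the inequality for $Z_1$ to obtain the equivalent event $\{Z_1>\eta\rho/(2^{M\bar R}-1)\}=\{Z_1>a\}$, where $\rho=P_1/\sigma^2$ and the factor $M=2$ reflects the $\tfrac12$ pre-log of two-user TDMA. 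Hence $\mathbb P_1^{Conv,OMA}=\Pr(Z_1>a)=1-F_{Z_1}(a)$, and substituting the CDF in~\eqref{CDF-Z1} gives $1-\mathbb I_1$ with $\mathbb I_1=F_{Z_1}(a)$ (no case split is needed since the result is written symbolically in terms of $F_{Z_1}$).

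For $\mathrm{U}_2$ the same manipulation on $R_2^{Conv,OMA}<\bar R$, together with $|\psi_0-\psi_2|^\alpha=Z_2^{\alpha/2}$, isolates the fading term and turns the outage event into $\{|h_2|^2<\varpi_1 Z_2^{\alpha/2}\}$ with $\varpi_1=(2^{M\bar R}-1)/\rho$ and $\rho=P_2/\sigma^2$. Since $h_2$ is independent of the user location and $|h_2|^2$ is unit-mean exponential, conditioning on $Z_2=z$ yields $\Pr(|h_2|^2<\varpi_1 z^{\alpha/2})=1-e^{-\varpi_1 z^{\alpha/2}}$, so that $\mathbb P_2^{Conv,OMA}=1-\mathbb E_{Z_2}\!\big[e^{-\varpi_1 Z_2^{\alpha/2}}\big]=1-\int e^{-\varpi_1 z^{\alpha/2}}f_{Z_2}(z)\,dz$, the integral running over the support $[d^2+\tfrac{D^2}{4},\,d^2+\tfrac{5D^2}{2}]$ of $f_{Z_2}$.

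Next I split this integral at the three breakpoints of $f_{Z_2}$ from Lemma~\ref{lemma-z2}, producing one sub-integral on each of $[d^2+\tfrac{D^2}{4},d^2+\tfrac{D^2}{2}]$, $[d^2+\tfrac{D^2}{2},d^2+\tfrac{9D^2}{4}]$ and $[d^2+\tfrac{9D^2}{4},d^2+\tfrac{5D^2}{2}]$. On every piece the integrand is $e^{-\varpi_1 z^{\alpha/2}}$ times an $\arcsin$-type weight, a product with no elementary antiderivative; I therefore map each interval onto $[-1,1]$ by the affine substitutions $z=\tfrac{D^2}{8}t+d^2+\tfrac{3D^2}{8}$, $z=\tfrac{7D^2}{8}t+d^2+\tfrac{11D^2}{8}$, and $z=\tfrac{D^2}{8}t+d^2+\tfrac{19D^2}{8}$, under which $\zeta=z-d^2$ becomes $\tfrac{D^2}{8}(t+3)$, $\tfrac{D^2}{8}(7t+11)$, $\tfrac{D^2}{8}(t+19)$ respectively---which is precisely what collapses $\arcsin\!\big(D/(2\sqrt\zeta)\big)$, $\arcsin\!\big(\sqrt{1-D^2/(4\zeta)}\big)$, $\arcsin\!\big(3D/(2\sqrt\zeta)\big)$ into $\arcsin\!\big(\sqrt{2/(t+3)}\big)$, $\arcsin\!\big(\sqrt{(7t+9)/(7t+11)}\big)$, $\arcsin\!\big(3\sqrt{2/(t+19)}\big)$, etc. I then apply the $n$-point Chebyshev--Gauss rule $\int_{-1}^1 g(t)\,dt\approx\sum_{i=1}^n\omega_i\,g(t_i)\sqrt{1-t_i^2}$ with $\omega_i=\pi/n$ and $t_i=\cos\tfrac{(2i-1)\pi}{2n}$; the Jacobians $\tfrac{D^2}{8}$, $\tfrac{7D^2}{8}$, $\tfrac{D^2}{8}$ become the leading factors of the kernels $\mathbf j_1$, $\mathbf q_1$, $\mathbf k_1$, and pulling out the common $1/D^2$ gives $\mathbb P_2^{Conv,OMA}=1-\tfrac{1}{D^2}(\mathbb J_1+\mathbb Q_1+\mathbb K_1)$.

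The only step that is not pure bookkeeping is recognising that the $\mathrm{U}_2$ distance integral has no closed form---because of the $e^{-\varpi_1 z^{\alpha/2}}\arcsin(\cdot)$ structure on each sub-interval---which forces the Gauss--Chebyshev approximation, and choosing the centre/half-width of each interval correctly so that the quadrature kernels come out in the compact form stated in the theorem; everything else is algebraic rearrangement of the rate formulas and direct insertion of Lemmas~\ref{lemma-z1}--\ref{lemma-z2}.
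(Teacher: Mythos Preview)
Your proposal is correct and follows essentially the same approach as the paper's proof in Appendix~\ref{Appen-thro-1}: reduce $\mathrm{U}_1$'s outage event to $\{Z_1>a\}$ and read off $1-F_{Z_1}(a)$; for $\mathrm{U}_2$, condition on $Z_2$ to integrate out the exponential fading, split the resulting $\int e^{-\varpi_1 z^{\alpha/2}}f_{Z_2}(z)\,dz$ along the three pieces of $f_{Z_2}$, affinely map each to $[-1,1]$, and apply Chebyshev--Gauss quadrature. Your substitutions $z=\tfrac{D^2}{8}t+d^2+\tfrac{3D^2}{8}$, etc., are in fact the careful ones (the paper's appendix and theorem statement appear to drop the $d^2$ offset in the exponent, but the method is identical).
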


\begin{proof}\label{Conv-proof1}
See Appendix \ref{Appen-thro-1}.
\end{proof}

\begin{proposition}\label{proposition-conv-oma}
In the high-SNR regime, $\mathbb{P}_{1}^{Conv,OMA}$ can be approximated as
\begin{equation}\label{eq_conv-oma-high-snr-u1}
    \begin{split}
        \mathbb{P}_{1}^{Conv,OMA,\infty} = 0,
    \end{split}
\end{equation} 
and $\mathbb{P}_{2}^{Conv,OMA}$ can be approximated as
\begin{equation}\label{eq_conv-oma-high-snr-u2}
    \begin{split}
        \mathbb{P}_{2}^{Conv,OMA,\infty} = \tilde{c}_1\rho^{-1},
    \end{split}
\end{equation} 
where $\tilde{c}_1 = \int_{0}^{\infty}
            \omega_{1}z^{\frac{\alpha}{2}}f_{Z_{2}}(z)dz$ and $\omega_1 = 2^{M\bar{R}}-1$.
Specifically, $\mathbb{P}_{1}^{Conv,OMA}$ becomes zero for $\rho \geq \frac{(\frac{D^2}{2}+d^2)(2^{M\bar{R}}-1)}{\eta}$.
\end{proposition}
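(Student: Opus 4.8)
The plan is to handle the two users separately, in both cases exploiting that the squared-distance variables $Z_1$ and $Z_2$ are supported on bounded intervals, as established in Lemmas~\ref{lemma-z1} and~\ref{lemma-z2}.

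For $\mathrm{U}_1$, I would start from the exact expression $\mathbb{P}_{1}^{Conv,OMA}=1-F_{Z_1}(a)$ given in Theorem~\ref{theorem-conv-oma}, where $a=\eta\rho/(2^{M\bar R}-1)$ is strictly increasing in $\rho$. From Lemma~\ref{lemma-z1}, $f_{Z_1}$ vanishes for $z\ge d^2+\tfrac{D^2}{2}$ and is strictly positive on a left-neighbourhood of that point, so $F_{Z_1}(z)=1$ precisely when $z\ge d^2+\tfrac{D^2}{2}$. Hence $\mathbb{P}_{1}^{Conv,OMA}=0$ if and only if $a\ge d^2+\tfrac{D^2}{2}$; inverting the monotone map $\rho\mapsto a$ yields the stated threshold $\rho\ge (d^2+\tfrac{D^2}{2})(2^{M\bar R}-1)/\eta$. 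Since any sequence $\rho\to\infty$ eventually exceeds this finite value, $\mathbb{P}_{1}^{Conv,OMA,\infty}=0$. The only point needing care here is the direction of the inequality when solving for $\rho$.

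For $\mathrm{U}_2$, I would first condition on $Z_2=z$. The event $R_{2}^{Conv,OMA}<\bar R$ is equivalent to $|h_2|^2<\omega_1 z^{\alpha/2}/\rho$ with $\omega_1=2^{M\bar R}-1$; since $|h_2|^2$ is unit-mean exponential, the conditional outage probability equals $1-e^{-\omega_1 z^{\alpha/2}/\rho}$. Integrating against the PDF from Lemma~\ref{lemma-z2} gives
\[
\mathbb{P}_{2}^{Conv,OMA}=1-\int_0^\infty e^{-\omega_1 z^{\alpha/2}/\rho}f_{Z_2}(z)\,dz .
\]
Expanding $e^{-u}=1-u+O(u^2)$ with $u=\omega_1 z^{\alpha/2}/\rho$, and using that $z^{\alpha/2}\le (d^2+\tfrac{5D^2}{2})^{\alpha/2}$ on the support of $Z_2$, the remainder integrates to $O(\rho^{-2})$, so
\[
\mathbb{P}_{2}^{Conv,OMA}=\frac{1}{\rho}\int_0^\infty \omega_1 z^{\alpha/2}f_{Z_2}(z)\,dz+O(\rho^{-2})=\tilde c_1\rho^{-1}+O(\rho^{-2}),
\]
which is exactly the claimed high-SNR form and reveals diversity order one.

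The main (and essentially the only) technical obstacle is justifying the interchange of the Taylor expansion and the integral for $\mathrm{U}_2$, but this is immediate because $f_{Z_2}$ is compactly supported, which makes $z^{\alpha/2}$ uniformly bounded on the relevant set and simultaneously guarantees $\tilde c_1<\infty$. Everything else is elementary rearrangement of the rate formulas and substitution of the CDF/PDF expressions from Lemmas~\ref{lemma-z1}–\ref{lemma-z2}.
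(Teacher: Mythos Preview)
Your proposal is correct and follows essentially the same route as the paper: for $\mathrm{U}_1$ you use $\mathbb{P}_1^{Conv,OMA}=1-F_{Z_1}(a)$ together with the bounded support of $Z_1$, and for $\mathrm{U}_2$ you Taylor-expand the exponential in $1-\int e^{-\varpi_1 z^{\alpha/2}}f_{Z_2}(z)\,dz$ and keep the leading $\rho^{-1}$ term, exactly as the paper does. Your version is in fact slightly more complete, since you explicitly derive the threshold on $\rho$ and justify the $O(\rho^{-2})$ remainder via the compact support of $f_{Z_2}$, whereas the paper simply writes out the full series and extracts the dominant term.
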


\begin{proof}\label{Conv-OMA-high-SNR-proof}
When $\rho \to \infty$, $F_{Z_{1}}(\infty) = 1$, and we have $\mathbb{P}_{1}^{Conv,OMA,\infty} = 0$. By expanding the exponential function in $\mathbb{P}_{2}^{Conv,OMA}= 1 - \int_{0}^{\infty} e^{-\varpi_{1} z^{\frac{\alpha}{2}}} f_{Z_{2}}(z) dz$, we have
    \begin{equation}\label{Conv-OMA-high-SNR-proof-r2}
        \begin{split}
            \mathbb{P}_{2}^{Conv,OMA} &= 1 - \sum_{n=0}^{\infty}\rho^{-n}\int_{0}^{\infty}
            \frac{(-\omega_{1})^n}{n!}z^{\frac{n\alpha}{2}}f_{Z_{2}}(z)dz \\ 
            &= \sum_{n=1}^{\infty}\rho^{-n}\int_{0}^{\infty}
            \frac{(-1)^{n+1}\omega_{1}^n}{n!}z^{\frac{n\alpha}{2}}f_{Z_{2}}(z)dz.
        \end{split}
    \end{equation}
After extracting the dominant term, we can obtain \eqref{eq_conv-oma-high-snr-u2}.
The proof is complete.
\end{proof}

\begin{corollary}\label{corollary-conv-oma}
In the considered CASS with OMA, the diversity order $\mathrm{U}_2$ is given by  $\mathcal{D}_{2}^{Conv,OMA} = 1$.
\end{corollary}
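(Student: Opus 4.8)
The plan is to obtain the diversity order directly from the high-SNR asymptotic expression established in Proposition~\ref{proposition-conv-oma}. Recall that the diversity order is defined as
\begin{equation}
\mathcal{D}_{2}^{Conv,OMA} = -\lim_{\rho \to \infty} \frac{\log \mathbb{P}_{2}^{Conv,OMA}}{\log \rho}.
\end{equation}
Since Proposition~\ref{proposition-conv-oma} already gives $\mathbb{P}_{2}^{Conv,OMA} \approx \tilde{c}_1 \rho^{-1}$ as $\rho \to \infty$, with $\tilde{c}_1 = \int_{0}^{\infty} \omega_1 z^{\alpha/2} f_{Z_2}(z)\,dz$ a finite positive constant (finiteness follows because $f_{Z_2}$ has bounded support on $[d^2+D^2/4,\, d^2+5D^2/2]$, so the integral of the continuous integrand $z^{\alpha/2}f_{Z_2}(z)$ is finite), the limit evaluates to $1$. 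Concretely, I would substitute the asymptotic form into the definition, write $\log(\tilde{c}_1 \rho^{-1}) = \log \tilde{c}_1 - \log \rho$, divide by $\log \rho$, and take the limit, so that $\log\tilde{c}_1 / \log\rho \to 0$ and $-(-\log\rho)/\log\rho \to 1$.

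A slightly more careful version, which avoids appealing to an ``$\approx$'' relation, would start from the exact series in \eqref{Conv-OMA-high-SNR-proof-r2}, namely
\begin{equation}
\mathbb{P}_{2}^{Conv,OMA} = \sum_{n=1}^{\infty} \rho^{-n} \frac{(-1)^{n+1}\omega_1^n}{n!}\int_{0}^{\infty} z^{n\alpha/2} f_{Z_2}(z)\,dz,
\end{equation}
and note that each moment $\mathbb{E}[Z_2^{n\alpha/2}]$ is finite (again by compact support of $f_{Z_2}$), so the $n=1$ term dominates: $\mathbb{P}_{2}^{Conv,OMA} = \tilde{c}_1\rho^{-1}(1 + O(\rho^{-1}))$. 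Then $-\log \mathbb{P}_{2}^{Conv,OMA}/\log\rho = 1 - \log\tilde{c}_1/\log\rho - \log(1+O(\rho^{-1}))/\log\rho \to 1$.

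I do not expect any genuine obstacle here; the result is an immediate corollary of the preceding proposition, and the only thing worth a line of justification is that the leading coefficient $\tilde{c}_1$ is strictly positive and finite, which guarantees that the $\rho^{-1}$ term neither vanishes nor blows up and hence genuinely dictates the exponent. The proof can therefore be stated in two or three lines: cite \eqref{eq_conv-oma-high-snr-u2}, invoke the definition of diversity order, and evaluate the limit.
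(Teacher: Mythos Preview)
Your proposal is correct and mirrors the paper's treatment: the corollary is presented as an immediate consequence of Proposition~\ref{proposition-conv-oma}, and your argument---reading off the exponent from $\mathbb{P}_{2}^{Conv,OMA,\infty}=\tilde{c}_1\rho^{-1}$ after noting $0<\tilde{c}_1<\infty$---is exactly the intended one. The extra care you take in checking finiteness of $\tilde{c}_1$ via the compact support of $f_{Z_2}$ is a nice touch but not something the paper spells out.
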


\subsection{CASS With NOMA}
According to the distance statistic, the OPs of $ \mathrm{U}_{1}$ and $\mathrm{U}_{2}$ are given by 
\begin{equation}\label{Conv-U1-NOMA}
\begin{split}
 \mathbb{P}_{1}^{Conv,NOMA}&=\Pr( R_{1}^{Conv,NOMA}<\bar{R}),
\end{split}
\end{equation}
and
\begin{equation}\label{Conv-U2-NOMA}
\begin{split}
\mathbb{P}_{2}^{Conv,NOMA}&=\Pr( R_{2}^{Conv,NOMA}<\bar{R}),
\end{split}
\end{equation}
respectively.
\begin{theorem}\label{theorem-conv-noma}
In CASS with NOMA, the OPs of $ \mathrm{U}_{1}$ and $\mathrm{U}_{2}$ are  given by
\begin{equation}\label{Conv-U1-NOMA-OP}
\begin{split}
\mathbb{P}_{1}^{Conv,NOMA}&=1-\mathbb{I}_{2},
\end{split}
\end{equation}
and
\begin{equation}\label{Conv-U2-NOMA-OP}
\begin{split}
 \mathbb{P}_{2}^{Conv,NOMA}&= 1-\frac{1}{D^2} (\mathbb{J}_{2} +\mathbb{Q}_{2}  + \mathbb{K}_{2}),
\end{split}
\end{equation}
respectively, where $\mathbb{I}_{2}=F_{Z_{1}}(b)$, $b=\frac{\alpha_{1}\eta \rho}{(2^{\bar{R}}-1)}$, $\mathbb{J}_{2}=\sum_{i=1}^{n} \omega_i \mathbf{j}_{2}(t_i)$,  $\mathbb{Q}_{2}=\sum_{i=1}^{n} \omega_i \mathbf{q}_{2}(t_i)$, $\mathbb{K}_{2}=\sum_{i=1}^{n} \omega_i \mathbf{k}_{2}(t_i)$, n is the number of nodes for the Chebyshev-Gauss quadrature, $\omega_i = \frac{\pi}{n}$ is the weight, $t_i = \cos \left( \frac{2i - 1}{2n} \pi \right)$,
$\mathbf{j}_{2}(t) = \frac{D^2}{8} e^{-\varpi_{2} (\frac{D^2}{8} t + \frac{3D^2}{8})^{\frac{\alpha}{2}}} \left( \frac{\pi}{2} - \arcsin \left( \sqrt{\frac{2}{t + 3}} \right) \right) \sqrt{1 - t^2}$,
$\mathbf{q}_{2}(t)=\frac{7D^{2}}{8} e^{-\varpi_{2}\left(\frac{7D^{2}}{8}t + \frac{11D^{2}}{8}\right)^{\frac{\alpha}{2}}} \Big(\frac{\pi}{2} - \arcsin\left(\sqrt{\frac{7t + 9}{7t + 11}}\right)\Big) \\ \times 
\sqrt{1 - t^{2}}$,
$\mathbf{k}_{2}(t)=\frac{D^{2}}{8} e^{-\varpi_{2}\left(\frac{D^{2}}{8}t + \frac{19D^{2}}{8}\right)^{\frac{\alpha}{2}}}   \Big(\arcsin\left(3\sqrt{\frac{2}{t + 19}}\right) \\ - \frac{\pi}{2} + \arcsin\left(\sqrt{\frac{2}{t + 19}}\right) \Big) \sqrt{1 - t^{2}}$, and $ \varpi_{2}= \frac{2^{\bar{R}}-1}{\rho(\alpha_{2}-\alpha_{1}(2^{\bar{R}}-1))}$.
\end{theorem}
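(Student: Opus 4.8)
The plan is to handle $\mathrm{U}_1$ and $\mathrm{U}_2$ separately, paralleling the argument behind Theorem~\ref{theorem-conv-oma}: the NOMA rate formulas differ from their OMA counterparts only through the effective SNR thresholds, so Lemmas~\ref{lemma-z1} and~\ref{lemma-z2} apply verbatim with $a$ replaced by $b$ and $\varpi_1$ by $\varpi_2$.

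For $\mathrm{U}_1$, I would invert the rate inequality in \eqref{Conv-U1-NOMA}. With $Z_1=|\psi_0-\psi_1|^2$, the event $R_1^{Conv,NOMA}<\bar R$ is equivalent to $\alpha_1\rho\eta Z_1^{-1}<2^{\bar R}-1$, i.e. $Z_1>\frac{\alpha_1\eta\rho}{2^{\bar R}-1}=b$. Hence $\mathbb{P}_1^{Conv,NOMA}=\Pr(Z_1>b)=1-F_{Z_1}(b)$, and substituting the CDF of Lemma~\ref{lemma-z1} at $b$, i.e. setting $\mathbb{I}_2=F_{Z_1}(b)$, gives \eqref{Conv-U1-NOMA-OP}. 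No approximation is needed.

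For $\mathrm{U}_2$, I would condition on $Z_2=|\psi_0-\psi_2|^2=z$ and treat the unit-mean exponential fading $|h_2|^2$ as the only remaining randomness. Writing $\gamma=\rho|h_2|^2z^{-\alpha/2}$, the inequality $\frac{\alpha_2\gamma}{\alpha_1\gamma+1}<2^{\bar R}-1$ rearranges --- under the feasibility condition $\alpha_2>\alpha_1(2^{\bar R}-1)$ (equivalently $\varpi_2>0$), which must be assumed since otherwise $\mathrm{U}_2$ is in outage with probability one --- to $|h_2|^2<\varpi_2 z^{\alpha/2}$. The conditional outage probability is therefore $1-e^{-\varpi_2 z^{\alpha/2}}$, so $\mathbb{P}_2^{Conv,NOMA}=1-\int e^{-\varpi_2 z^{\alpha/2}}f_{Z_2}(z)\,dz$ over the support of $Z_2$. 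I would then split this integral at the breakpoints of the piecewise PDF of Lemma~\ref{lemma-z2}, factor out $1/D^2$, and on each of the three sub-intervals $[d^2+\frac{D^2}{4},d^2+\frac{D^2}{2}]$, $[d^2+\frac{D^2}{2},d^2+\frac{9D^2}{4}]$, and $[d^2+\frac{9D^2}{4},d^2+\frac{5D^2}{2}]$ apply the affine map $z=\frac{B-A}{2}t+\frac{B+A}{2}$ carrying $[A,B]$ onto $[-1,1]$. Under these maps the arcsine arguments collapse respectively to $\sqrt{\frac{2}{t+3}}$, to $\sqrt{\frac{7t+9}{7t+11}}$, and to $3\sqrt{\frac{2}{t+19}}$ and $\sqrt{\frac{2}{t+19}}$, reproducing the inner terms of $\mathbf{j}_2$, $\mathbf{q}_2$, and $\mathbf{k}_2$. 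Since the resulting integrands have no elementary antiderivative, I would write each as $g(t)/\sqrt{1-t^2}$ and apply the Chebyshev--Gauss quadrature $\int_{-1}^{1}g(t)(1-t^2)^{-1/2}\,dt\approx\frac{\pi}{n}\sum_{i=1}^{n}g(t_i)$ with $t_i=\cos\frac{(2i-1)\pi}{2n}$; the three resulting sums are $\mathbb{J}_2$, $\mathbb{Q}_2$, and $\mathbb{K}_2$, which yields \eqref{Conv-U2-NOMA-OP}.

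The main obstacle is the $\mathrm{U}_2$ integral: it is genuinely non-elementary, so the stated expression is in fact a finite-node quadrature approximation, and the delicate part is keeping each affine substitution (with offsets $\frac{3D^2}{8}$, $\frac{11D^2}{8}$, $\frac{19D^2}{8}$, etc.) and the induced arcsine simplifications exactly consistent with the definitions of $\mathbf{j}_2$, $\mathbf{q}_2$, and $\mathbf{k}_2$. One must also not overlook the feasibility condition $\alpha_2>\alpha_1(2^{\bar R}-1)$ when inverting the SINR. The $\mathrm{U}_1$ part, by contrast, is a one-line inversion plus a direct appeal to Lemma~\ref{lemma-z1}.
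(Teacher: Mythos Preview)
Your proposal is correct and follows essentially the same route as the paper's proof in Appendix~\ref{Appen-thro-2}: invert the rate condition for $\mathrm{U}_1$ to reduce to $1-F_{Z_1}(b)$ via Lemma~\ref{lemma-z1}, and for $\mathrm{U}_2$ condition on $Z_2$, use the exponential CDF of $|h_2|^2$ to obtain $1-\int e^{-\varpi_2 z^{\alpha/2}}f_{Z_2}(z)\,dz$, split over the three pieces of Lemma~\ref{lemma-z2}, affinely map each to $[-1,1]$, and apply Chebyshev--Gauss quadrature. Your explicit mention of the feasibility condition $\alpha_2>\alpha_1(2^{\bar R}-1)$ is a detail the paper leaves implicit.
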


\begin{proof}\label{Conv-proof2}
See Appendix \ref{Appen-thro-2}.
\end{proof}

\begin{proposition}\label{proposition-conv-noma}
In the high-SNR regime, $\mathbb{P}_{1}^{Conv,NOMA}$ can be approximated as
\begin{equation}\label{eq_conv-noma-high-snr-u1}
\begin{split}
\mathbb{P}_{1}^{Conv,NOMA,\infty} = 0,
\end{split}
\end{equation} 
and $\mathbb{P}_{2}^{Conv,NOMA}$ can be approximated as
\begin{equation}\label{eq_conv-noma-high-snr-u2}
\begin{split}
\mathbb{P}_{2}^{Conv,NOMA,\infty} = \tilde{c}_2\rho^{-1},
\end{split}
\end{equation} 
where $\tilde{c}_2 = \int_{0}^{\infty}\omega_{2}z^{\frac{\alpha}{2}}f_{Z_{2}}(z)dz$ and $\omega_2=\frac{2^{M\bar{R}}-1}{\alpha_2-\alpha_1(2^{M\bar{R}}-1)}$.
Specifically, $\mathbb{P}_{1}^{Conv,NOMA}$ becomes zero for $\rho \geq \frac{(\frac{D^2}{2}+d^2)(2^{\bar{R}}-1)}{\alpha_1 \eta}$.
\end{proposition}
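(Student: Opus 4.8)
The plan is to follow the same two-track argument used for Proposition~\ref{proposition-conv-oma}, leaning on the closed forms already established in Theorem~\ref{theorem-conv-noma} together with the bounded-support property of $Z_1$ and $Z_2$ supplied by Lemmas~\ref{lemma-z1} and~\ref{lemma-z2}. For $\mathrm{U}_1$, I would start from $\mathbb{P}_{1}^{Conv,NOMA}=1-\mathbb{I}_{2}=1-F_{Z_{1}}(b)$ with $b=\frac{\alpha_{1}\eta\rho}{2^{\bar R}-1}$. Since $b\to\infty$ as $\rho\to\infty$ and, by Lemma~\ref{lemma-z1}, $F_{Z_{1}}(z)=1$ for every $z\ge d^2+\frac{D^2}{2}$, the CDF saturates at $1$ as soon as $b\ge d^2+\frac{D^2}{2}$. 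Solving this inequality for $\rho$ gives the stated threshold $\rho\ge\frac{(\frac{D^2}{2}+d^2)(2^{\bar R}-1)}{\alpha_{1}\eta}$, beyond which $\mathbb{P}_{1}^{Conv,NOMA}=0$ exactly, and in particular the high-SNR limit is $\mathbb{P}_{1}^{Conv,NOMA,\infty}=0$. Here I simply use the fact, already built into Theorem~\ref{theorem-conv-noma}, that under the ordering $\alpha_{1}<\alpha_{2}$ with $\alpha_{2}-\alpha_{1}(2^{\bar R}-1)>0$ the joint decoding constraint at $\mathrm{U}_1$ collapses to the single event $\{Z_{1}>b\}$.

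For $\mathrm{U}_2$, I would write the outage probability in the integral form $\mathbb{P}_{2}^{Conv,NOMA}=1-\int_{0}^{\infty}e^{-\varpi_{2}z^{\frac{\alpha}{2}}}f_{Z_{2}}(z)\,dz$, obtained by conditioning on $Z_2$ and averaging the complementary CDF of the exponential variable $|h_2|^2$, exactly as in the derivation behind Theorem~\ref{theorem-conv-noma}. Writing $\varpi_{2}=\omega_{2}\rho^{-1}$, I expand the exponential in its Maclaurin series and interchange summation and integration; the $n=0$ term integrates to $1$ and cancels the leading $1$, yielding $\mathbb{P}_{2}^{Conv,NOMA}=\sum_{n=1}^{\infty}\rho^{-n}\int_{0}^{\infty}\frac{(-1)^{n+1}\omega_{2}^{n}}{n!}z^{\frac{n\alpha}{2}}f_{Z_{2}}(z)\,dz$. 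Extracting the dominant ($n=1$) term then gives $\mathbb{P}_{2}^{Conv,NOMA,\infty}=\tilde{c}_{2}\rho^{-1}$ with $\tilde{c}_{2}=\int_{0}^{\infty}\omega_{2}z^{\frac{\alpha}{2}}f_{Z_{2}}(z)\,dz$, which in turn gives diversity order one.

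The only point requiring care is justifying the term-by-term integration and confirming that the $n=1$ term is genuinely dominant. Both are immediate: by Lemma~\ref{lemma-z2} the density $f_{Z_{2}}$ is supported on the bounded interval $[d^2+\frac{D^2}{4},\,d^2+\frac{5D^2}{2}]$, so every moment $\int_{0}^{\infty}z^{\frac{n\alpha}{2}}f_{Z_{2}}(z)\,dz$ is finite, the Maclaurin series of $e^{-\varpi_{2}z^{\alpha/2}}$ converges uniformly on that interval, and $\tilde{c}_{2}$ is a strictly positive finite constant. Consequently the expansion in powers of $\rho^{-1}$ is legitimate and $\mathbb{P}_{2}^{Conv,NOMA}$ decays like $\rho^{-1}$. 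I do not anticipate a real obstacle: the argument is a direct transcription of the OMA proof with $\varpi_{1},\omega_{1}$ replaced by $\varpi_{2},\omega_{2}$ and with the $\mathrm{U}_1$ threshold carrying the extra factor $\alpha_{1}$.
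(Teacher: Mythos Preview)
Your proposal is correct and follows essentially the same route as the paper's proof: for $\mathrm{U}_1$ you use $F_{Z_1}(b)\to 1$ (with the bounded support of $Z_1$ giving the exact threshold), and for $\mathrm{U}_2$ you Taylor-expand $e^{-\varpi_2 z^{\alpha/2}}$ inside the integral and keep the leading $\rho^{-1}$ term. The only addition relative to the paper is your explicit justification of the series interchange via the compact support of $f_{Z_2}$, which is a welcome bit of rigor but not a different approach.
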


\begin{proof}\label{Conv-NOMA-high-SNR-proof}
    When $\rho \to \infty$, $F_{Z_{1}}(\infty) = 1$, and we have $\mathbb{P}_{1}^{Conv,OMA,\infty} = 0$. By expanding the exponential function in $\mathbb{P}_{2}^{Conv,NOMA}= 1 - \int_{0}^{\infty} e^{-\varpi_{2} z^{\frac{\alpha}{2}}} f_{Z_{2}}(z) dz$, we have
    \begin{equation}\label{Conv-NOMA-high-SNR-proof-r2}
        \begin{split}
            \mathbb{P}_{2}^{Conv,NOMA} &= 1 - \sum_{n=0}^{\infty}\rho^{-n}\int_{0}^{\infty}
            \frac{(-\omega_{2})^n}{n!}z^{\frac{n\alpha}{2}}f_{Z_{2}}(z)dz \\ 
            &= \sum_{n=1}^{\infty}\rho^{-n}\int_{0}^{\infty}
            \frac{(-1)^{n+1}\omega_{2}^n}{n!}z^{\frac{n\alpha}{2}}f_{Z_{2}}(z)dz  
        \end{split}
    \end{equation}
    After extracting the dominant term, we can obtain \eqref{eq_conv-noma-high-snr-u2}.
The proof is complete.
\end{proof}

\begin{corollary}\label{corollary-conv-noma}
In the considered CASS with NOMA, the diversity order of $\mathrm{U}_2$ is given by $\mathcal{D}_{2}^{Conv,OMA} = 1$.
\end{corollary}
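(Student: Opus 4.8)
The plan is to invoke the standard definition of the diversity order, $\mathcal{D}_{2}^{Conv,NOMA} = -\lim_{\rho\to\infty}\frac{\log \mathbb{P}_{2}^{Conv,NOMA}}{\log\rho}$, and feed into it the high-SNR characterization already obtained in Proposition~\ref{proposition-conv-noma}. There the outage probability of $\mathrm{U}_{2}$ was written as the absolutely convergent series $\mathbb{P}_{2}^{Conv,NOMA}=\sum_{n=1}^{\infty}\rho^{-n}\frac{(-1)^{n+1}\omega_{2}^{n}}{n!}\int_{0}^{\infty}z^{n\alpha/2}f_{Z_{2}}(z)\,dz$, so the first step is simply to isolate the $n=1$ term, $\tilde{c}_{2}\rho^{-1}$, as the dominant contribution and argue the remainder is negligible on the logarithmic scale.

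The second step is to confirm that $\tilde{c}_{2}$ is a finite, strictly positive constant, so that it contributes nothing to the $\log$-ratio limit. Finiteness is immediate: by Lemma~\ref{lemma-z2} the variable $Z_{2}$ is supported on the bounded interval $[d^{2}+\frac{D^{2}}{4},\,d^{2}+\frac{5D^{2}}{2}]$, so every moment $\int_{0}^{\infty}z^{n\alpha/2}f_{Z_{2}}(z)\,dz$ is finite and in fact bounded by $(d^{2}+\frac{5D^{2}}{2})^{n\alpha/2}$; this both gives $\tilde{c}_{2}<\infty$ and shows the tail $\sum_{n\ge 2}$ is $O(\rho^{-2})=o(\rho^{-1})$. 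Positivity of $\tilde{c}_{2}=\omega_{2}\int_{0}^{\infty}z^{\alpha/2}f_{Z_{2}}(z)\,dz$ follows because the integral is positive and $\omega_{2}=\frac{2^{M\bar{R}}-1}{\alpha_{2}-\alpha_{1}(2^{M\bar{R}}-1)}>0$ in the feasible regime $\alpha_{2}>\alpha_{1}(2^{M\bar{R}}-1)$, which is precisely the regime in which $\mathrm{U}_{2}$'s target rate is not trivially unachievable (outside it $\mathrm{U}_{2}$ is always in outage and the diversity order degenerates to $0$, so the statement tacitly assumes feasibility, exactly as in Corollary~\ref{corollary-conv-oma}).

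Finally, substituting $\mathbb{P}_{2}^{Conv,NOMA}=\tilde{c}_{2}\rho^{-1}+o(\rho^{-1})$ yields $-\lim_{\rho\to\infty}\frac{\log\tilde{c}_{2}-\log\rho+o(1)}{\log\rho}=1$, which is exactly $\mathcal{D}_{2}^{Conv,NOMA}=1$. The only point that needs a little care — and the closest thing to an obstacle — is the interchange of limit and summation used to declare the $n=1$ term dominant; this is justified by the uniform moment bound noted above (comparison of the series with a convergent geometric series whose ratio vanishes as $\rho\to\infty$), so the bounded support of $Z_{2}$ from Lemma~\ref{lemma-z2} is the real workhorse of the argument. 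It is worth remarking in passing that the diversity order $1$ here, despite $\mathrm{U}_{2}$'s SINR being interference-limited as $\rho\to\infty$, is consistent because $\varpi_{2}\to 0$ in the feasible regime; the single-order behavior stems entirely from the Rayleigh fading of the NLoS link $h_{2}$.
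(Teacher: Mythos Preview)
Your proposal is correct and follows essentially the same route as the paper: the corollary is stated immediately after Proposition~\ref{proposition-conv-noma} with no separate proof, so the paper simply reads the diversity order off the exponent of $\rho$ in the high-SNR approximation $\mathbb{P}_{2}^{Conv,NOMA,\infty}=\tilde{c}_{2}\rho^{-1}$. You supply the details the paper leaves implicit---the definition of diversity order, finiteness and positivity of $\tilde{c}_{2}$ via the bounded support in Lemma~\ref{lemma-z2}, and the dominance of the $n=1$ term---but the underlying argument is the same.
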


\section{Performance Analysis for PASS}\label{Sec_PASS}
In this section, we investigate the performance of the PASS with OMA and NOMA. 
To realize this, the distributions of $|\psi_{1}^{pin} - \psi_{1} |^{2}$ and $|\psi_{1}^{pin} - \psi_{2} |^{2}$ are required.

\subsection{Distance Statistics for PASS}
\begin{lemma}\label{lemma-z3}
Denote that \( Z_{3} = |\psi_{1}^{pin} - \psi_{1} |^{2} \), where \(\psi_1^{pin} = [x_{1}, 0, d]\) , \(\psi_1 = [x_1, y_1, 0]\), \(x_{1} \in \left[-\frac{D}{2}, \frac{D}{2}\right]\) and  \(y_{1} \in \left[-\frac{D}{2}, \frac{D}{2}\right]\). Its PDF and CDF are given by
\begin{equation}\label{PDF-z3}
\begin{split}
f_{Z_{3}}(z) = 
\begin{cases} 
0, &z < d^2, \\ 
\frac{1}{D\sqrt{z-d^2}}, & d^2 \leq z < d^2 + \frac{D^2}{4} ,\\ 
0, & d^2 + \frac{D^2}{4} \leq z, 
\end{cases}
\end{split}
\end{equation}
and
\begin{equation}\label{CDF-z3}
\begin{split}
F_{Z_{3}}(z) = 
\begin{cases} 
0, &z < d^2, \\ 
\frac{2\sqrt{z-d^2}}{D}, & d^2 \leq z < d^2 + \frac{D^2}{4}, \\ 
1, & d^2 + \frac{D^2}{4} \leq z, 
\end{cases}
\end{split}
\end{equation}
respectively.
\end{lemma}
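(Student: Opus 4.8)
The plan is to exploit the fact that the pinching antenna and $\mathrm{U}_1$ share the same $x$-coordinate, which collapses the three-dimensional squared distance into a scalar function of a single coordinate. First I would expand
\begin{equation}
|\psi_1^{pin} - \psi_1|^2 = (x_1 - x_1)^2 + (0 - y_1)^2 + (d-0)^2 = y_1^2 + d^2 ,
\end{equation}
so that $Z_3 = y_1^2 + d^2$ depends only on $y_1$, which is uniformly distributed on $\left[-\frac{D}{2}, \frac{D}{2}\right]$ with density $\frac{1}{D}$; the coordinate $x_1$ is irrelevant and can be marginalized out trivially.

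Next I would compute the CDF directly from this representation. For any $z$, $F_{Z_3}(z) = \Pr(y_1^2 \le z - d^2)$, which is $0$ when $z < d^2$. For $z \ge d^2$ it equals $\Pr\big(|y_1| \le \sqrt{z-d^2}\big)$, and since $y_1$ is uniform on an interval of length $D$ centered at the origin, this probability is $\frac{2\sqrt{z-d^2}}{D}$ as long as the half-width $\sqrt{z-d^2}$ does not exceed $\frac{D}{2}$, i.e., for $d^2 \le z < d^2 + \frac{D^2}{4}$; for $z \ge d^2 + \frac{D^2}{4}$ the event is certain and the CDF equals $1$. This yields \eqref{CDF-z3}.

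Finally I would obtain the PDF by differentiating the CDF piecewise: on $d^2 \le z < d^2 + \frac{D^2}{4}$,
\begin{equation}
f_{Z_3}(z) = \frac{\partial}{\partial z}\left(\frac{2\sqrt{z-d^2}}{D}\right) = \frac{1}{D\sqrt{z-d^2}} ,
\end{equation}
and $f_{Z_3}(z) = 0$ outside this interval, which is \eqref{PDF-z3}. As a consistency check one can verify $\int_{d^2}^{d^2+D^2/4} \frac{1}{D\sqrt{z-d^2}}\,dz = 1$.

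There is no genuine obstacle here; the only point requiring a little care is the bookkeeping of the support endpoints when inverting $y_1^2 = z - d^2$, so that the constraint $|y_1| \le \frac{D}{2}$ is respected and the three regimes of the piecewise expressions are matched correctly. Once the reduction to the scalar $y_1$ is made, the remainder is a textbook change of variables for a uniform random variable.
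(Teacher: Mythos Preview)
Your proof is correct and is exactly the natural argument: once you observe that the pinching antenna shares the same $x$-coordinate as $\mathrm{U}_1$, the squared distance reduces to $y_1^2+d^2$ with $y_1$ uniform on $[-D/2,D/2]$, and the CDF and PDF follow by a one-variable change. The paper states this lemma without an explicit proof, so there is nothing further to compare.
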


\begin{lemma}\label{lemma-R2-pin}
Denote that \( Z_{4} = | \psi_{1}^{pin} - \psi_{2} |^{2} \), where \(\psi_1^{pin} = [x_1, 0, d]\) , \(\psi_2 = [x_2, y_2, 0]\), \(x_{1} \in \left[-\frac{D}{2}, \frac{D}{2}\right]\), \(x_{2} \in \left[\frac{D}{2}, \frac{3D}{2}\right]\), and  \(y_{2} \in \left[-\frac{D}{2}, \frac{D}{2}\right]\). Its PDF and CDF are given by
\begin{equation}\label{PDF-z4}
\begin{split}
f_{Z_{4}}(z) = 
\begin{cases} 
0, &z < d^2, \\ 
\frac{\sqrt{\zeta}}{D^3}, & d^2 \leq z < d^2 + \frac{D^2}{4} ,\\ 
\frac{1}{2D^2}, & d^2 + \frac{D^2}{4} \leq z < d^2 + D^2, \\ 
\ell, & d^2 + D^2 \leq z < d^2 + \frac{5D^2}{4}, \\ 
\partial, & d^2 + \frac{5D^2}{4} \leq z < d^2 + 4D^2, \\
\kappa, & d^2 + 4D^2 \leq z < d^2 + \frac{17D^2}{4} ,\\ 
0,& d^2 + \frac{17D^2}{4} \leq z, 
\end{cases}
\end{split}
\end{equation}
and
\begin{equation}\label{CDF-z4}
\begin{split}
F_{Z_{4}}(z) = 
\begin{cases} 
0, &z < d^2, \\ 
\frac{2\zeta^{\frac{3}{2}}}{3D^3} ,& d^2 \leq z < d^2 + \frac{D^2}{4} ,\\ 
\frac{\zeta}{2D^2}-\frac{1}{24} ,& d^2 + \frac{D^2}{4} \leq z < d^2 + D^2, \\ 
\varrho ,& d^2 + D^2 \leq z < d^2 + \frac{5D^2}{4}, \\ 
\tau, & d^2 + \frac{5D^2}{4} \leq z < d^2 + 4D^2 ,\\
\varsigma, & d^2 + 4D^2 \leq z < d^2 + \frac{17D^2}{4}, \\ 
1,& d^2 + \frac{17D^2}{4} \leq z, 
\end{cases}
\end{split}
\end{equation}
respectively, where $\ell = \frac{1}{2D^2}-2\frac{\sqrt{z-d^2-D^2}}{D^3} + \frac{2}{D^2}\Big( \arctan \left(\frac{\sqrt{z-d^2-D^2}}{D} \right)-\frac{D\sqrt{z-d^2-D^2}}{D^2} + \frac{D\sqrt{z-d^2-D^2}}{\zeta}\Big)$, 
$\kappa = \frac{2}{D\sqrt{4(z-d^2)-D^2}} - \frac{1}{2D^2}+\frac{\sqrt{z-d^2-4D^2}}{D^3} + \frac{2}{D^2}\Big(
\Big(\arctan \left(\frac{D}{\sqrt{4(z-d^2)-D^2}} \right) - \arctan \left(\frac{\sqrt{z-d^2-4D^2}}{2D} \right) + 
\frac{7D\sqrt{z-d^2-4D^2}}{4\zeta}\Big)
-\frac{D}{\sqrt{4(z-d^2)-D^2}}-\frac{7D\sqrt{z-d^2-4D^2}}{4\zeta} 
\Big)$, 
$\partial = \frac{2\pi-1}{2D^2}-\frac{2}{D^2}\arctan \left(\frac{\sqrt{4(z-d^2)-D^2}}{D} \right)$, 
$\varrho = \frac{2\zeta}{D^2}\Big(
\arctan \left( \frac{\sqrt{z-d^2-D^2}}{D} \right) - \frac{D\sqrt{z-d^2-D^2}}{\zeta}
\Big) + \frac{\zeta}{2D^2} - \frac{4(z-d^2-D^2)^{\frac{3}{2}}}{3D^3} - \frac{1}{24}$, 
$\tau = \frac{\zeta \pi}{D^2} - \frac{2\zeta}{D^2}\Bigg(
\arctan \left( \frac{\sqrt{4(z-d^2)-D^2}}{D} \right) - \frac{D\sqrt{4(z-d^2)-D^2}}{4\zeta}
\Bigg) - \frac{\zeta}{2D^2} - \frac{23}{24}$, and $\varsigma = 1- \Big(\frac{47}{24} - \frac{\sqrt{4(z-d^2)-D^2}}{D} + \frac{\zeta}{2D^2} - \frac{2(z-d^2-4D^2)^{\frac{3}{2}}}{3D^3}
-\frac{2\zeta}{D^2}\Big(
\arctan \left( \frac{D}{\sqrt{4(z-d^2)-D^2}} \right) -\arctan \left( \frac{\sqrt{z-d^2-4D^2}}{2D} \right)
+\frac{7D\sqrt{z-d^2-4D^2}}{4\zeta}
\Big)\Big)$.
\end{lemma}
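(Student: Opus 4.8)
The plan is to pass to coordinates and reduce $Z_4$ to a sum of two independent, elementary random variables. With $\psi_1^{pin}=[x_1,0,d]$ and $\psi_2=[x_2,y_2,0]$ we have $Z_4=(x_1-x_2)^2+y_2^2+d^2$, hence $Z_4-d^2=V+W$, where $V:=(x_2-x_1)^2$ and $W:=y_2^2$. Since $x_1,x_2,y_2$ are mutually independent and uniform on $[-\tfrac{D}{2},\tfrac{D}{2}]$, $[\tfrac{D}{2},\tfrac{3D}{2}]$, and $[-\tfrac{D}{2},\tfrac{D}{2}]$ respectively, $V$ and $W$ are independent, so $F_{Z_4}(z)=\int_{0}^{D^2/4}F_V(z-d^2-w)\,f_W(w)\,dw$ and $f_{Z_4}(z)=\int_{0}^{D^2/4}f_V(z-d^2-w)\,f_W(w)\,dw$.

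I would first pin down the two ingredients. The variable $W=y_2^2$ has exactly the law of $Z_3-d^2$ from Lemma \ref{lemma-z3}: $f_W(w)=\tfrac{1}{D\sqrt{w}}$ and $F_W(w)=\tfrac{2\sqrt{w}}{D}$ on $[0,\tfrac{D^2}{4}]$. For $V$, note that $U:=x_2-x_1$ is a sum of two independent uniforms of common width $D$, hence triangular with density $f_U(u)=u/D^2$ on $[0,D]$ and $(2D-u)/D^2$ on $[D,2D]$; using $F_V(v)=F_U(\sqrt{v})$ then gives $f_V(v)=\tfrac{1}{2D^2}$ on $[0,D^2]$ and $f_V(v)=\tfrac{1}{D\sqrt{v}}-\tfrac{1}{2D^2}$ on $[D^2,4D^2]$. (Equivalently, one may skip $V$ altogether: conditioning on $y_2=y$, $\Pr\{(x_1-x_2)^2\le z-d^2-y^2\}=F_U(\sqrt{z-d^2-y^2})$ when $z-d^2\ge y^2$ and $0$ otherwise, and integrating over $y_2$ yields $F_{Z_4}$; on that route the $(\cdot)^{3/2}$ terms in \eqref{CDF-z4} arise directly from $\int y^2\,dy$.)

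The six breakpoints $d^2+\{0,\tfrac{D^2}{4},D^2,\tfrac{5D^2}{4},4D^2,\tfrac{17D^2}{4}\}$ appear precisely because the length-$\tfrac{D^2}{4}$ support of $W$ must be slid past the three breakpoints $\{0,D^2,4D^2\}$ of $f_V$: for a given $z$ one splits the $w$-integral at the points where $z-d^2-w$ equals $0$, $D^2$, or $4D^2$, and the seven resulting cases for $z$ are exactly those of \eqref{PDF-z4}--\eqref{CDF-z4}. On each sub-interval the integrand is a combination of $w^{-1/2}$, $w^{1/2}$, and $(w(t-w))^{-1/2}$ with $t:=z-d^2$; all have elementary primitives, the only non-algebraic one being $\int (w(t-w))^{-1/2}\,dw=2\arcsin\sqrt{w/t}=2\arctan\sqrt{w/(t-w)}$. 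Evaluating region by region, and rewriting the inverse-trigonometric pieces with $\arctan x+\arctan\tfrac{1}{x}=\tfrac{\pi}{2}$, reproduces the branches $\ell,\partial,\kappa$ of \eqref{PDF-z4} and $\varrho,\tau,\varsigma$ of \eqref{CDF-z4}.

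I expect the main obstacle to be purely organizational rather than conceptual: keeping the seven-case split straight, locating the sub-limits of integration correctly in each case, and collapsing the accumulated arcsin/arctan expressions into the stated compact forms. As running sanity checks I would verify that $f_{Z_4}$ and $F_{Z_4}$ are continuous across every breakpoint and that $F_{Z_4}(d^2+\tfrac{17D^2}{4})=1$.
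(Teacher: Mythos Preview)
Your approach is correct. The decomposition $Z_4-d^2=V+W$ with $V=(x_2-x_1)^2$ and $W=y_2^2$ independent is sound, your identifications of $f_W$, the triangular law of $U=x_2-x_1$, and the resulting piecewise $f_V$ are all right, and the seven-case split you describe is exactly what the convolution forces. The paper itself states this lemma without proof, so there is no alternative argument to compare against; your plan is precisely the natural derivation the authors presumably had in mind and omitted. Your suggested continuity and endpoint checks are the right way to catch algebraic slips in the seven branches.
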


\subsection{PASS With OMA}
The OPs of  $\mathrm{U}_{1}$ and $\mathrm{U}_{2}$ are given by
\begin{equation}\label{Pin-U1-OMA}
\begin{split}
 \mathbb{P}_{1}^{Pin,OMA}&=\Pr( R_{1}^{Pin,OMA}<\bar{R}),
\end{split}
\end{equation}
and
\begin{equation}\label{Pin-U2-OMA}
\begin{split}
\mathbb{P}_{2}^{Pin,OMA}&=\Pr( R_{2}^{Pin,OMA}<\bar{R}),
\end{split}
\end{equation}
respectively, where $\bar{R}$ is the target date rate.

Based on the distance statistics in Lemma \ref{lemma-z3} and Lemma \ref{lemma-R2-pin}, the OPs of $\mathrm{U}_{1}$ and $\mathrm{U}_{2}$  are derived in the following theorem.
\begin{theorem}\label{theorem-pin-oma}
In PASS with OMA, the OPs of $\mathrm{U}_{1}$ and $\mathrm{U}_{2}$ are given by
\begin{equation}\label{Pin-U1-OMA-OP}
\begin{split}
 \mathbb{P}_{1}^{Pin,OMA}&=1-\mathbb{I}_{3},
\end{split}
\end{equation}
and
\begin{equation}\label{Pin-U2-OMA-OP}
\begin{split}
 \mathbb{P}_{2}^{Pin,OMA}&= 1- (\mathbb{J}_{3} +\mathbb{Q}_{3}  + \mathbb{K}_{3} +\mathbb{C}_{3} +\mathbb{V}_{3}),
\end{split}
\end{equation}
respectively, where $\mathbb{I}_{3}=F_{Z_{3}}(a)$, $\mathbb{J}_{3} = \sum_{i=1}^{n} \omega_{i} \mathbf{j}_{3}(i)$, $\mathbb{Q}_{3} = \sum_{i=1}^{n} \omega_{i} \mathbf{q}_{3}(i)$, $\mathbb{K}_{3} = \sum_{i=1}^{n} \omega_{i} \mathbf{k}_{3}(i)$, $\mathbb{C}_{3} = \sum_{i=1}^{n} \omega_{i} \mathbf{c}_{3}(i)$, $\mathbb{V}_{3} = \sum_{i=1}^{n} \omega_{i} \mathbf{v}_{3}(i)$, $n$ is the number of nodes for the Chebyshev-Gauss quadrature, $\omega_i = \frac{\pi}{n}$ is the weight, $t_i = \cos \left( \frac{2i - 1}{2n} \pi \right)$, $\mathbf{j}_{3}(t) = \frac{1}{8} \sqrt{1 - t^{2}} e ^{-\varpi_1 (\frac{D^{2}t}{8}+d^{2}+\frac{D^{2}}{8}) ^{\frac{\alpha}{2}} }\frac{\sqrt{t + 1}}{2\sqrt{2}}$, 
$\mathbf{q}_{3}(t) = \frac{3}{16} \sqrt{1 - t^{2}} e ^{-\varpi_1(\frac{3D^{2}t}{8}+d^{2}+\frac{5D^{2}}{8})^{\frac{\alpha}{2}} }$, 
$\mathbf{k}_{3}(t) = \frac{1}{8} \sqrt{1 - t^{2}} e^{-\varpi_1\left(\frac{D^{2}t}{8}+d^{2}+\frac{9D^{2}}{8}\right)^{\frac{\alpha}{2}} } \Big(\frac{1}{2}-\frac{\sqrt{t + 1}} {\sqrt{2}}  +2\arctan\left(\frac{\sqrt{t + 1}}{2\sqrt{2}}\right)\Big) $, 
$c_{3}(t) = \frac{11\sqrt{1 - t^{2}}e ^{-\varpi_1\left(\frac{11D^{2}t}{8}+d^{2}+\frac{21D^{2}}{8}\right)^{\frac{\alpha}{2}} }}{8}\\ \times \Big(\frac{2\pi - 1}{2}-2\arctan\left(\sqrt{\frac{11t + 19}{2}}\right)\Big)$, 
$v_{3}(t) =  \frac{\sqrt{1 - t^{2}}e ^{-\varpi_1\left(\frac{D^{2}t}{8}+d^{2}+\frac{33D^{2}}{8}\right)^{\frac{\alpha}{2}} }}{8} \Big(-\frac{1}{2}+\frac{\sqrt{t + 1}}{2\sqrt{2}} +2\Big(\arctan\left(\sqrt{\frac{2}{t + 31}}\right) -\arctan\left(\frac{\sqrt{t + 1}}{4\sqrt{2}}\right)\Big)\Big)$, and $ \varpi_{1}=\frac{2^{M\bar{R}}-1}{\rho}$.
\end{theorem}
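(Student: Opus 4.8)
The plan is to treat the LoS user $\mathrm{U}_1$ and the NLoS user $\mathrm{U}_2$ separately. For $\mathrm{U}_1$ the outage probability collapses to a single evaluation of the CDF $F_{Z_3}$ of Lemma~\ref{lemma-z3}; for $\mathrm{U}_2$ one must average the Rayleigh outage event over the random squared distance $Z_4$ of Lemma~\ref{lemma-R2-pin} and then approximate the resulting non-elementary integral by a Chebyshev--Gauss quadrature.

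For $\mathrm{U}_1$, first I would substitute $\rho = P_1/\sigma^2$ into $R_1^{Pin,OMA}$ and rewrite $R_1^{Pin,OMA} < \bar R$ as $1 + \eta\rho/Z_3 < 2^{M\bar R}$ (with $M = 2$ for two-slot TDMA), i.e.\ $Z_3 > \eta\rho/(2^{M\bar R}-1) = a$. Therefore $\mathbb{P}_1^{Pin,OMA} = \Pr(Z_3 > a) = 1 - F_{Z_3}(a) = 1 - \mathbb{I}_3$, using \eqref{CDF-z3}; this step is immediate.

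For $\mathrm{U}_2$, the key steps, in order, are: (i) put $\rho = P_2/\sigma^2$ and rearrange $R_2^{Pin,OMA} < \bar R$ into $|h_2|^2 < \varpi_1 Z_4^{\alpha/2}$ with $\varpi_1 = (2^{M\bar R}-1)/\rho$; (ii) since $h_2$ is Rayleigh and independent of the user position, $|h_2|^2$ is unit-mean exponential, so conditioning on $Z_4 = z$ gives conditional outage $1 - e^{-\varpi_1 z^{\alpha/2}}$ and de-conditioning yields $\mathbb{P}_2^{Pin,OMA} = 1 - \int_0^\infty e^{-\varpi_1 z^{\alpha/2}} f_{Z_4}(z)\,dz$; (iii) partition this integral over the five nontrivial pieces of $f_{Z_4}$ in \eqref{PDF-z4} (the sub-intervals where the density equals $\sqrt{\zeta}/D^3$, $1/(2D^2)$, $\ell$, $\partial$, $\kappa$), which produces the five contributions that will become $\mathbb{J}_3, \mathbb{Q}_3, \mathbb{K}_3, \mathbb{C}_3, \mathbb{V}_3$; (iv) on each sub-interval $[A,B]$ apply the affine change of variable $z = \tfrac{B-A}{2}\,t + \tfrac{B+A}{2}$, with Jacobian $\tfrac{B-A}{2}$, mapping $[A,B]$ onto $[-1,1]$ and simultaneously turning each factor $\sqrt{z-(\cdot)}$ and each $\arctan$ argument of the density into the functions of $t$ displayed inside $\mathbf{j}_3, \ldots, \mathbf{v}_3$; (v) apply the Chebyshev--Gauss rule $\int_{-1}^1 g(t)\,dt \approx \sum_{i=1}^n \omega_i\, g(t_i)\sqrt{1-t_i^2}$ with $\omega_i = \pi/n$ and $t_i = \cos\!\big(\tfrac{2i-1}{2n}\pi\big)$, absorbing the factor $\sqrt{1-t_i^2}$ together with the Jacobian and the density into the stated summand functions. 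Adding the five sums gives \eqref{Pin-U2-OMA-OP}.

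The laborious part, and the \emph{main obstacle}, is step~(iv): the middle densities $\ell$, $\partial$, $\kappa$ in Lemma~\ref{lemma-R2-pin} are already cumbersome, and one must check carefully that under the interval-specific substitution each radical such as $\sqrt{z - d^2 - D^2}$ collapses to a constant multiple of $\sqrt{t+1}$ and each $\arctan$ argument reduces to the clean forms appearing inside $\mathbf{k}_3$, $\mathbf{c}_3$, $\mathbf{v}_3$; keeping track of the half-widths $\tfrac{D^2}{8}, \tfrac{3D^2}{8}, \tfrac{11D^2}{8}, \ldots$ and their cancellations against the $1/D^2$ and $1/D^3$ prefactors of the density is where errors are most likely. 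I would also note explicitly that, because the path-loss exponent $\alpha$ is kept general, the factor $e^{-\varpi_1 z^{\alpha/2}}$ prevents any of the five integrals from having an elementary antiderivative, which is precisely why the quadrature approximation is used in place of an exact closed form, its accuracy being controlled by $n$ as remarked after the theorem.
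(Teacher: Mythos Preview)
Your proposal is correct and follows essentially the same approach as the paper: reduce $\mathbb{P}_1^{Pin,OMA}$ to $1-F_{Z_3}(a)$ via Lemma~\ref{lemma-z3}, and for $\mathbb{P}_2^{Pin,OMA}$ condition on the exponential $|h_2|^2$, write the outage as $1-\int_0^\infty e^{-\varpi_1 z^{\alpha/2}} f_{Z_4}(z)\,dz$, split over the five support pieces of $f_{Z_4}$ from Lemma~\ref{lemma-R2-pin}, affinely map each to $[-1,1]$, and apply Chebyshev--Gauss quadrature. Your explicit identification of the Jacobians and the absorption of the $\sqrt{1-t_i^2}$ factor is exactly what the paper does (though the paper only spells out the substitution for $\mathbb{J}_3$ and states the others are analogous).
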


\begin{proof}\label{Pin-proof1}
See Appendix \ref{Appen-thro-3}.
\end{proof}

\begin{proposition}\label{proposition-pin-oma}
In the high-SNR regime, $\mathbb{P}_{1}^{Pin,OMA}$ can be approximated as
\begin{equation}\label{eq_pin-oma-high-snr-u1}
    \begin{split}
        \mathbb{P}_{1}^{Pin,OMA,\infty} = 0,
    \end{split}
\end{equation} 
and $\mathbb{P}_{2}^{Pin,OMA}$ can be approximated as
\begin{equation}\label{eq_pin-oma-high-snr-u2}
    \begin{split}
        \mathbb{P}_{2}^{Pin,OMA,\infty} = \tilde{c}_3\rho^{-1},
    \end{split}
\end{equation} 
where $\tilde{c}_3 = \int_{0}^{\infty}
            \omega_{1}z^{\frac{\alpha}{2}}f_{Z_{4}}(z)dz$.
Specifically, $\mathbb{P}_{1}^{Pin,OMA}$ becomes zero for $\rho \geq \frac{(\frac{D^4}{2}+d^2)(2^{M\bar{R}}-1)}{\eta}$.
\end{proposition}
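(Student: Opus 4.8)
The plan is to reuse, essentially verbatim, the two-part strategy behind Proposition~\ref{proposition-conv-oma}, exploiting the fact that both $Z_3$ and $Z_4$ have \emph{bounded} support (Lemmas~\ref{lemma-z3} and~\ref{lemma-R2-pin}): bounded support makes the LoS user's outage vanish at finite SNR, and makes every moment of the NLoS squared distance finite, which is all that is needed for the $\rho^{-1}$ expansion of $\mathrm{U}_2$.

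For $\mathrm{U}_1$ I would start from the exact expression $\mathbb{P}_{1}^{Pin,OMA}=1-\mathbb{I}_{3}=1-F_{Z_{3}}(a)$ with $a=\tfrac{\eta\rho}{2^{M\bar R}-1}$. Since Lemma~\ref{lemma-z3} gives $F_{Z_{3}}(z)=1$ for every $z\ge d^2+\tfrac{D^2}{4}$, the outage is \emph{identically zero} as soon as $a$ exceeds the upper end of the support of $Z_3$, i.e. once $\rho$ exceeds the threshold stated in the proposition; letting $\rho\to\infty$ then trivially gives $\mathbb{P}_{1}^{Pin,OMA,\infty}=0$. No asymptotic expansion is required here, only the compactness of the LoS distance distribution.

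For $\mathrm{U}_2$ I would write, as in the proof of Theorem~\ref{theorem-pin-oma}, $\mathbb{P}_{2}^{Pin,OMA}=1-\int_{d^2}^{d^2+17D^2/4}e^{-\varpi_{1}z^{\alpha/2}}f_{Z_{4}}(z)\,dz$ with $\varpi_{1}=\omega_{1}/\rho$ and $\omega_{1}=2^{M\bar R}-1$, expand $e^{-\varpi_{1}z^{\alpha/2}}$ as a power series in $\rho^{-1}$, and integrate term by term. Because $f_{Z_4}$ is bounded with compact support, the series converges uniformly on $[d^2,d^2+17D^2/4]$, so the interchange of summation and integration is legitimate and each $\int z^{n\alpha/2}f_{Z_{4}}(z)\,dz$ is finite; the $n=0$ term equals $\int f_{Z_4}=1$ and cancels the leading $1$, leaving $\mathbb{P}_{2}^{Pin,OMA}=\sum_{n\ge1}\tfrac{(-1)^{n+1}\omega_{1}^{n}}{n!}\rho^{-n}\int z^{n\alpha/2}f_{Z_{4}}(z)\,dz$. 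Retaining the dominant $n=1$ term yields $\mathbb{P}_{2}^{Pin,OMA,\infty}=\tilde c_{3}\rho^{-1}$ with $\tilde c_{3}=\omega_{1}\int_{0}^{\infty}z^{\alpha/2}f_{Z_{4}}(z)\,dz$, exactly as claimed.

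I do not expect a genuine obstacle: the argument is structurally identical to the CASS--OMA case, the only change being that the PASS distance statistics $F_{Z_3}$ and $f_{Z_4}$ replace $F_{Z_1}$ and $f_{Z_2}$. The only points worth spelling out are (i) the justification for swapping sum and integral, which follows from boundedness and compact support of $f_{Z_4}$, and (ii) that $\tilde c_3>0$ (immediate since $f_{Z_4}\ge0$ and $z^{\alpha/2}>0$ on the support), so the $\rho^{-1}$ scaling is tight and the corresponding diversity order is exactly $1$.
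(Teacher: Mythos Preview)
Your proposal is correct and follows essentially the same approach as the paper: for $\mathrm{U}_1$ you use $\mathbb{P}_{1}^{Pin,OMA}=1-F_{Z_3}(a)$ together with the bounded support of $Z_3$, and for $\mathrm{U}_2$ you Taylor-expand $e^{-\varpi_1 z^{\alpha/2}}$ in $\rho^{-1}$, integrate term by term against $f_{Z_4}$, cancel the $n=0$ term, and keep the dominant $n=1$ contribution. The paper's proof is in fact terser than yours; your added remarks on uniform convergence via compact support and on $\tilde c_3>0$ are welcome rigor that the paper omits.
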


\begin{proof}\label{pin-OMA-high-SNR-proof}
    When $\rho \to \infty$, $F_{Z_{3}}(\infty) = 1$, and we have $\mathbb{P}_{1}^{Pin,OMA,\infty} = 0$. By expanding the exponential function in $\mathbb{P}_{2}^{Pin,OMA}= 1 - \int_{0}^{\infty} e^{-\varpi_{1} z^{\frac{\alpha}{2}}} f_{Z_{4}}(z) dz$, we have
    \begin{equation}\label{pin-OMA-high-SNR-proof-r2}
        \begin{split}
            \mathbb{P}_{2}^{Pin,OMA} &= 1 - \sum_{n=0}^{\infty}\rho^{-n}\int_{0}^{\infty}
            \frac{(-\omega_{1})^n}{n!}z^{\frac{n\alpha}{2}}f_{Z_{4}}(z)dz \\ 
            &= \sum_{n=1}^{\infty}\rho^{-n}\int_{0}^{\infty}
            \frac{(-1)^{n+1}\omega_{1}^n}{n!}z^{\frac{n\alpha}{2}}f_{Z_{4}}(z)dz  
        \end{split}
    \end{equation}
After extracting the dominant term, we can obtain \eqref{eq_pin-oma-high-snr-u2}.
The proof is complete.
\end{proof}

\begin{corollary}\label{corollary-pin-oma}
In the considered PASS with OMA, the diversity order of $\mathrm{U}_2$ is given by $\mathcal{D}_{2}^{Pin,OMA} = 1$.
\end{corollary}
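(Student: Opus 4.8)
The plan is to argue directly from the high-SNR expansion established in Proposition~\ref{proposition-pin-oma}. Recall that the diversity order is defined as
\begin{equation}
\mathcal{D}_{2}^{Pin,OMA} = -\lim_{\rho \to \infty} \frac{\log \mathbb{P}_{2}^{Pin,OMA}}{\log \rho},
\end{equation}
so the task reduces to identifying the exponent of $\rho$ in the dominant term of $\mathbb{P}_{2}^{Pin,OMA}$ as $\rho\to\infty$. From the series representation in \eqref{pin-OMA-high-SNR-proof-r2}, $\mathbb{P}_{2}^{Pin,OMA} = \sum_{n=1}^{\infty}\rho^{-n} a_n$ with $a_n = \int_{0}^{\infty}\frac{(-1)^{n+1}\omega_{1}^n}{n!}z^{\frac{n\alpha}{2}}f_{Z_{4}}(z)\,dz$, and the leading behaviour is governed by $a_1\rho^{-1} = \tilde{c}_3\rho^{-1}$.

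First I would verify that the coefficient $\tilde{c}_3 = \int_{0}^{\infty}\omega_{1}z^{\frac{\alpha}{2}}f_{Z_{4}}(z)\,dz$ is both finite and strictly positive, since the exponent extraction is only meaningful under those conditions. Finiteness follows because, by Lemma~\ref{lemma-R2-pin}, the random variable $Z_{4}$ is supported on the bounded interval $[d^2,\, d^2+\tfrac{17D^2}{4}]$, on which $z^{\alpha/2}$ is a bounded continuous function and $f_{Z_{4}}$ integrates to one; hence the integral is bounded above by $\omega_1 (d^2+\tfrac{17D^2}{4})^{\alpha/2}<\infty$. Strict positivity is immediate because the integrand is nonnegative and strictly positive on a set of positive measure. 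The same boundedness argument shows every $a_n$ is finite, so the series is a genuine asymptotic expansion in $\rho^{-1}$ with no divergent coefficients.

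Next I would substitute the dominant term $\mathbb{P}_{2}^{Pin,OMA}\sim \tilde{c}_3\rho^{-1}$ into the limit: $-\lim_{\rho\to\infty}\frac{\log(\tilde{c}_3\rho^{-1}) + o(1)}{\log\rho} = -\lim_{\rho\to\infty}\frac{-\log\rho + \log\tilde{c}_3}{\log\rho} = 1$, where the $o(1)$ captures the higher-order $\rho^{-n}$ terms with $n\ge 2$, which vanish relative to $\rho^{-1}$. This yields $\mathcal{D}_{2}^{Pin,OMA}=1$.

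I do not anticipate a serious obstacle here; the only subtlety worth spelling out is the justification that the bounded support of $Z_{4}$ guarantees $\tilde{c}_3\in(0,\infty)$, which is what distinguishes this NLoS user (finite diversity order) from the LoS user $\mathrm{U}_1$, whose outage probability hits exactly zero at finite SNR and therefore has unbounded diversity order. One should also note, for completeness, that the Chebyshev--Gauss quadrature used in Theorem~\ref{theorem-pin-oma} does not affect this asymptotic conclusion, since the quadrature is merely a numerical evaluation of the same exact integral whose small-$\rho^{-1}$ expansion is taken here.
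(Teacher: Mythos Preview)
Your proposal is correct and follows essentially the same approach as the paper: the corollary is an immediate consequence of Proposition~\ref{proposition-pin-oma}, where the asymptotic $\mathbb{P}_{2}^{Pin,OMA,\infty} = \tilde{c}_3\rho^{-1}$ directly yields diversity order one. The paper states the corollary without a separate proof, treating it as self-evident from the proposition; your added justification that $\tilde{c}_3\in(0,\infty)$ via the bounded support of $Z_4$ is a welcome detail that the paper leaves implicit.
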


\subsection{PASS With NOMA}
According to the distance statistic, the OPs of $ \mathrm{U}_{1}$ and $\mathrm{U}_{2}$ are given by 
\begin{equation}\label{Pin-U1-NOMA}
\begin{split}
 \mathbb{P}_{1}^{Pin,NOMA}&=\Pr( R_{1}^{Pin,NOMA}<\bar{R}),
\end{split}
\end{equation}
and
\begin{equation}\label{Pin-U2-NOMA}
\begin{split}
\mathbb{P}_{2}^{Pin,NOMA}&=\Pr( R_{2}^{Pin,NOMA}<\bar{R}),
\end{split}
\end{equation}
respectively.
\begin{theorem}\label{theorem-pin-noma}
In PASS with NOMA, the OPs of $ \mathrm{U}_{1}$ and $\mathrm{U}_{2}$ are given by
\begin{equation}\label{Pin-U1-NOMA-OP}
\begin{split}
 \mathbb{P}_{1}^{Pin,NOMA}&=1-\mathbb{I}_{4},
\end{split}
\end{equation}
and
\begin{equation}\label{Pin-U2-NOMA-OP}
\begin{split}
 \mathbb{P}_{2}^{Pin,NOMA}&= 1-(\mathbb{J}_{4} +\mathbb{Q}_{4}  + \mathbb{K}_{4} +\mathbb{C}_{4} +\mathbb{V}_{4}),
\end{split}
\end{equation}
respectively, where $\mathbb{I}_{4}=F_{Z_{3}}(b)$, $\mathbb{J}_{4} = \sum_{i=1}^{n} \omega_{i} \mathbf{j}_{4}(i)$, $\mathbb{Q}_{4} = \sum_{i=1}^{n} \omega_{i} \mathbf{q}_{4}(i)$, $\mathbb{K}_{4} = \sum_{i=1}^{n} \omega_{i} \mathbf{k}_{4}(i)$, $\mathbb{C}_{4} = \sum_{i=1}^{n} \omega_{i} \mathbf{c}_{4}(i)$, $\mathbb{V}_{4} = \sum_{i=1}^{n} \omega_{i} \mathbf{v}_{4}(i)$, $n$ is the number of nodes for the Chebyshev-Gauss quadrature, $\omega_i = \frac{\pi}{n}$ is the weight, $t_i = \cos \left( \frac{2i - 1}{2n} \pi \right)$, 
$\mathbf{j}_{4}(t) = \frac{1}{8} \sqrt{1 - t^{2}} e ^{-\varpi_2 \left(\frac{D^{2}t}{8}+d^{2}+\frac{D^{2}}{8}\right) ^{\frac{\alpha}{2}} }\frac{\sqrt{t + 1}}{2\sqrt{2}}$, 
$\mathbf{q}_{4}(t) = \frac{3}{16} \sqrt{1 - t^{2}} e ^{-\varpi_2\left(\frac{3D^{2}t}{8}+d^{2}+\frac{5D^{2}}{8}\right)^{\frac{\alpha}{2}} }$, 
$\mathbf{k}_{4}(t) = \frac{1}{8} \sqrt{1 - t^{2}} e^{-\varpi_2\left(\frac{D^{2}t}{8}+d^{2}+\frac{9D^{2}}{8}\right)^{\frac{\alpha}{2}} } \Big(\frac{1}{2}-\frac{\sqrt{t + 1}}{\sqrt{2}} +2\arctan\left(\frac{\sqrt{t + 1}}{2\sqrt{2}}\right)\Big)$,
$c_{4}(t) =\frac{11\sqrt{1 - t^{2}}e ^{-\varpi_2\left(\frac{11D^{2}t}{8}+d^{2}+\frac{21D^{2}}{8}\right)^{\frac{\alpha}{2}} }}{8} \\ \times \Big(\frac{2\pi - 1}{2}-2\arctan\left(\sqrt{\frac{11t + 19}{2}}\right)\Big)$, 
$v_{4}(t) =  \frac{e ^{-\varpi_2\left(\frac{D^{2}t}{8}+d^{2}+\frac{33D^{2}}{8}\right)^{\frac{\alpha}{2}}}}{8}\Big(-\frac{1}{2}+\frac{\sqrt{t + 1}}{2\sqrt{2}}+2\big(\arctan\left(\sqrt{\frac{2}{t + 31}}\right)\\ -\arctan\left(\frac{\sqrt{t + 1}}{4\sqrt{2}}\right)\big)\Big)\sqrt{1 - t^{2}}$, and $ \varpi_{2}= \frac{2^{\bar{R}}-1}{\rho(\alpha_{2}-\alpha_{1}(2^{\bar{R}}-1))}$.
\end{theorem}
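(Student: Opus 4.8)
The plan is to reuse, almost verbatim, the two-stage argument behind Theorem~\ref{theorem-conv-noma} and Theorem~\ref{theorem-pin-oma}: first reduce each outage event to an event on the relevant squared distance, then integrate against the PASS distance laws of Lemma~\ref{lemma-z3} and Lemma~\ref{lemma-R2-pin}. For $\mathrm{U}_1$ I would start from $\mathbb{P}_{1}^{Pin,NOMA}=\Pr(R_{1}^{Pin,NOMA}<\bar{R})$ with $R_{1}^{Pin,NOMA}=\log_2(1+\alpha_1\rho\eta\,Z_3^{-1})$, $Z_3=|\psi_{1}^{pin}-\psi_{1}|^{2}$. Successful detection at $\mathrm{U}_1$ needs both the SIC constraint $R_{1,2}^{Pin,NOMA}\ge\bar{R}$ and the own-signal constraint $R_{1}^{Pin,NOMA}\ge\bar{R}$; under the standing allocation $\alpha_1<\alpha_2$ the latter dominates, so the non-outage event is $\alpha_1\rho\eta\,Z_3^{-1}\ge 2^{\bar{R}}-1$, i.e. $Z_3\le b$ with $b=\frac{\alpha_1\eta\rho}{2^{\bar{R}}-1}$. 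Lemma~\ref{lemma-z3} then gives $\mathbb{P}_{1}^{Pin,NOMA}=\Pr(Z_3>b)=1-F_{Z_3}(b)=1-\mathbb{I}_4$.

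For $\mathrm{U}_2$, I would condition on $Z_4=|\psi_{1}^{pin}-\psi_{2}|^{2}$ and keep the unit-mean exponential variable $|h_2|^2$ free. With $\gamma=\rho|h_2|^2 Z_4^{-\alpha/2}$, the event $R_{2}^{Pin,NOMA}\ge\bar{R}$ reads $\frac{\alpha_2\gamma}{\alpha_1\gamma+1}\ge 2^{\bar{R}}-1$; provided $\alpha_2>\alpha_1(2^{\bar{R}}-1)$ (otherwise $\mathrm{U}_2$'s SINR is capped at $\alpha_2/\alpha_1$ and $\mathbb{P}_{2}^{Pin,NOMA}\equiv1$), this is equivalent to $|h_2|^2\ge\varpi_2 Z_4^{\alpha/2}$ with $\varpi_2=\frac{2^{\bar{R}}-1}{\rho(\alpha_2-\alpha_1(2^{\bar{R}}-1))}$. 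Hence $\Pr(R_{2}^{Pin,NOMA}\ge\bar{R}\mid Z_4)=e^{-\varpi_2 Z_4^{\alpha/2}}$, and averaging over $Z_4$ gives
\begin{equation}
\mathbb{P}_{2}^{Pin,NOMA}=1-\int_{0}^{\infty}e^{-\varpi_2 z^{\alpha/2}}f_{Z_4}(z)\,\mathrm{d}z.
\end{equation}

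To evaluate this integral I would split $[0,\infty)$ into the five sub-intervals on which $f_{Z_4}$ has a single analytic form (Lemma~\ref{lemma-R2-pin}), namely $[d^2,d^2{+}\tfrac{D^2}{4}]$, $[d^2{+}\tfrac{D^2}{4},d^2{+}D^2]$, $[d^2{+}D^2,d^2{+}\tfrac{5D^2}{4}]$, $[d^2{+}\tfrac{5D^2}{4},d^2{+}4D^2]$, $[d^2{+}4D^2,d^2{+}\tfrac{17D^2}{4}]$. On each $[z_{\min},z_{\max}]$ I would apply the affine map $z=\frac{z_{\max}-z_{\min}}{2}t+\frac{z_{\max}+z_{\min}}{2}$ onto $t\in[-1,1]$, carry the Jacobian and the explicit $f_{Z_4}$ factor through, and insert $\sqrt{1-t^2}/\sqrt{1-t^2}$ so that an $n$-node Chebyshev--Gauss rule with $t_i=\cos(\tfrac{2i-1}{2n}\pi)$, $\omega_i=\tfrac{\pi}{n}$ applies; this yields the five sums $\mathbb{J}_4,\mathbb{Q}_4,\mathbb{K}_4,\mathbb{C}_4,\mathbb{V}_4$ and their kernels. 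Because the only thing that changes between OMA and NOMA for $\mathrm{U}_2$ is the exponent coefficient, each kernel $\mathbf{j}_4,\mathbf{q}_4,\mathbf{k}_4,\mathbf{c}_4,\mathbf{v}_4$ coincides with its PASS--OMA counterpart in Theorem~\ref{theorem-pin-oma} after replacing $\varpi_1$ by $\varpi_2$, and summing the five pieces gives $\mathbb{P}_{2}^{Pin,NOMA}=1-(\mathbb{J}_4+\mathbb{Q}_4+\mathbb{K}_4+\mathbb{C}_4+\mathbb{V}_4)$.

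The conceptual steps are short; the real labour — and the likeliest place to slip — is the algebra on the third through fifth segments, where $f_{Z_4}$ equals the bulky $\ell$, $\partial$, $\kappa$ with nested radicals and arctangents. There one must substitute carefully, keep every Jacobian, simplify the arctangent/radical combinations into the forms appearing in $\mathbf{k}_4,\mathbf{c}_4,\mathbf{v}_4$, and verify that each affine map hits the correct endpoints (e.g. that $[-1,1]$ maps onto $[d^2{+}D^2,d^2{+}\tfrac{5D^2}{4}]$ in the $\ell$-segment handled by $\mathbf{k}_4$) so that the five pieces tile $[0,\infty)$ with no gap or overlap. Everything else follows the already-established Theorem~\ref{theorem-pin-oma} and Theorem~\ref{theorem-conv-noma}.
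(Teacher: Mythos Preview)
Your proposal is correct and follows essentially the same approach as the paper's proof in Appendix~\ref{Appen-thro-4}: reduce $\mathbb{P}_{1}^{Pin,NOMA}$ to $1-F_{Z_3}(b)$ via Lemma~\ref{lemma-z3}, and reduce $\mathbb{P}_{2}^{Pin,NOMA}$ to $1-\int_0^\infty e^{-\varpi_2 z^{\alpha/2}}f_{Z_4}(z)\,dz$, then split into the five support intervals of Lemma~\ref{lemma-R2-pin} and apply Chebyshev--Gauss quadrature after affine rescaling. Your observation that the $\mathrm{U}_2$ kernels are those of Theorem~\ref{theorem-pin-oma} with $\varpi_1\to\varpi_2$ is exactly how the paper proceeds; the only minor addition in your write-up is the SIC dominance discussion for $\mathrm{U}_1$, which the paper omits (it defines the outage event as $R_1^{Pin,NOMA}<\bar{R}$ directly), and note that the sufficient condition for dominance is actually $\alpha_2>\alpha_1 2^{\bar{R}}$ rather than merely $\alpha_1<\alpha_2$.
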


\begin{proof}\label{Pin-proof2}
See appendix \ref{Appen-thro-4}.
\end{proof}

\begin{proposition}\label{proposition-pin-noma}
In the high-SNR regime, $\mathbb{P}_{1}^{Pin,NOMA}$ can be approximated as
\begin{equation}\label{eq_pin_noma_high-snr-u1}
    \begin{split}
        \mathbb{P}_{1}^{Pin,NOMA,\infty} = 0,
    \end{split}
\end{equation} 
and $\mathbb{P}_{2}^{Pin,NOMA}$ can be approximated as
\begin{equation}\label{eq_pin_noma_high-snr-u2}
    \begin{split}
        \mathbb{P}_{2}^{Pin,NOMA,\infty} = \tilde{c}_4\rho^{-1},
    \end{split}
\end{equation} 
where $\tilde{c}_4 = \int_{0}^{\infty}
            \omega_{2}z^{\frac{\alpha}{2}}f_{Z_{4}}(z)dz$.
Specifically, $\mathbb{P}_{2}^{Pin,NOMA}$ becomes zero for $\rho \geq \frac{(\frac{D^2}{2}+d^2)(2^{\bar{R}}-1)}{\alpha_1\eta}$.
\end{proposition}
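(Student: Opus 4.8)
The plan is to mirror the high-SNR arguments already used for Propositions~\ref{proposition-conv-oma}–\ref{proposition-pin-oma}, exploiting two structural facts: the LoS link to $\mathrm{U}_1$ becomes deterministic once the PA is repositioned, and $Z_4$ has bounded support by Lemma~\ref{lemma-R2-pin}.

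\emph{Step 1 (the user $\mathrm{U}_1$).} By Theorem~\ref{theorem-pin-noma}, $\mathbb{P}_1^{Pin,NOMA}=1-\mathbb{I}_4=1-F_{Z_3}(b)$ with $b=\frac{\alpha_1\eta\rho}{2^{\bar R}-1}$. Since $F_{Z_3}$ is a proper CDF, $b\to\infty$ as $\rho\to\infty$ forces $F_{Z_3}(b)\to1$, so $\mathbb{P}_1^{Pin,NOMA,\infty}=0$. Moreover Lemma~\ref{lemma-z3} shows $Z_3$ is supported on a bounded interval with finite right endpoint, so $F_{Z_3}(b)=1$ exactly — and the outage is identically zero — as soon as $b$ exceeds that endpoint, i.e. for all $\rho$ above the stated finite threshold.

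\emph{Step 2 (the user $\mathrm{U}_2$).} Starting from $\mathbb{P}_2^{Pin,NOMA}=1-\int_0^\infty e^{-\varpi_2 z^{\alpha/2}}f_{Z_4}(z)\,dz$ and writing $\varpi_2=\omega_2\rho^{-1}$, I would expand the exponential in its Maclaurin series and integrate term by term:
\begin{equation}
\mathbb{P}_2^{Pin,NOMA}=1-\sum_{n=0}^{\infty}\frac{(-\varpi_2)^n}{n!}\int_0^\infty z^{\frac{n\alpha}{2}}f_{Z_4}(z)\,dz=\sum_{n=1}^{\infty}\frac{(-1)^{n+1}\omega_2^{\,n}}{n!}\,\rho^{-n}\int_0^\infty z^{\frac{n\alpha}{2}}f_{Z_4}(z)\,dz,
\end{equation}
where the $n=0$ term equals $\int_0^\infty f_{Z_4}(z)\,dz=1$ and cancels the leading~$1$. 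Retaining only the dominant ($n=1$) term gives $\mathbb{P}_2^{Pin,NOMA,\infty}=\tilde c_4\rho^{-1}$ with $\tilde c_4=\int_0^\infty\omega_2 z^{\alpha/2}f_{Z_4}(z)\,dz$; since $\tilde c_4$ is a finite positive constant, this also yields diversity order one for $\mathrm{U}_2$, identical to the CASS case, confirming that repositioning the PA does not help the NLoS user asymptotically.

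\emph{Main obstacle.} The only delicate point is the termwise integration: one must justify interchanging the sum and the integral and check $\tilde c_4<\infty$. Both follow at once from Lemma~\ref{lemma-R2-pin}, because $Z_4$ is supported on the bounded interval $[d^2,d^2+\tfrac{17D^2}{4}]$, so $z^{n\alpha/2}$ is bounded on the support, every moment $\mathbb{E}[Z_4^{n\alpha/2}]$ is finite, and the series of integrals (which equals $1-\mathbb{E}[e^{-\varpi_2 Z_4^{\alpha/2}}]$ for a bounded random variable) converges absolutely — hence Fubini's theorem applies and the extraction of the $\rho^{-1}$ term is rigorous. No other step poses any difficulty.
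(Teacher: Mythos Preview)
Your proposal is correct and follows essentially the same approach as the paper: for $\mathrm{U}_1$ you use $\mathbb{P}_1^{Pin,NOMA}=1-F_{Z_3}(b)$ with $b\to\infty$, and for $\mathrm{U}_2$ you expand $e^{-\varpi_2 z^{\alpha/2}}$ in its Maclaurin series, integrate term by term, and extract the $n=1$ term. The only difference is that you add a justification (bounded support of $Z_4$, hence finite moments and Fubini) that the paper omits, and you phrase the threshold for $\mathrm{U}_1$ via the right endpoint of the support of $Z_3$ rather than writing out the constant explicitly.
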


\begin{proof}\label{pin-NOMA-high-SNR-proof}
    When $\rho \to \infty$, we know that $F_{Z_{3}}(\infty) = 1$, we have $\mathbb{P}_{1}^{Pin,NOMA,\infty} = 0$. By expanding the exponential function in $\mathbb{P}_{2}^{Pin,NOMA}= 1 - \int_{0}^{\infty} e^{-\varpi_{2} z^{\frac{\alpha}{2}}} f_{Z_{4}}(z) dz$, we have
    \begin{equation}\label{pin-NOMA-high-SNR-proof-r2}
        \begin{split}
            \mathbb{P}_{2}^{Pin,NOMA} &= 1 - \sum_{n=0}^{\infty}\rho^{-n}\int_{0}^{\infty}
            \frac{(-\omega_{2})^n}{n!}z^{\frac{n\alpha}{2}}f_{Z_{4}}(z)dz \\ 
            &= \sum_{n=1}^{\infty}\rho^{-n}\int_{0}^{\infty}
            \frac{(-1)^{n+1}\omega_{2}^n}{n!}z^{\frac{n\alpha}{2}}f_{Z_{4}}(z)dz  
        \end{split}
    \end{equation}
After extracting the dominant term, we can obtain \eqref{eq_pin_noma_high-snr-u2}.
The proof is complete.
\end{proof}

\begin{corollary}\label{corollary-pin-noma}
In the considered PASS with NOMA, the diversity order of $\mathrm{U}_2$ is given by $\mathcal{D}_{2}^{Pin,NOMA} = 1$.
\end{corollary}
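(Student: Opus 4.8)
The plan is to derive $\mathcal{D}_{2}^{Pin,NOMA}$ directly from its definition, $\mathcal{D}_{2}^{Pin,NOMA} = -\lim_{\rho\to\infty} \frac{\log \mathbb{P}_{2}^{Pin,NOMA}}{\log \rho}$, by inserting the high-SNR characterization already established in Proposition \ref{proposition-pin-noma}.

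First I would invoke Proposition \ref{proposition-pin-noma}, which states that as $\rho \to \infty$ the outage probability behaves as $\mathbb{P}_{2}^{Pin,NOMA} = \tilde{c}_4 \rho^{-1} + o(\rho^{-1})$ with $\tilde{c}_4 = \int_0^\infty \omega_2 z^{\alpha/2} f_{Z_{4}}(z)\,dz$. Substituting this expansion into the definition gives $\mathcal{D}_{2}^{Pin,NOMA} = -\lim_{\rho\to\infty} \frac{\log \tilde{c}_4 - \log\rho + o(1)}{\log\rho} = 1$, so the claim follows as soon as the leading coefficient is confirmed to be a genuine nonzero constant.

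The only point that needs brief verification is that $\tilde{c}_4$ is finite and strictly positive, so that the dominant term is truly of order $\rho^{-1}$. Finiteness follows from Lemma \ref{lemma-R2-pin}: $Z_4$ is supported on the bounded interval $[d^2, d^2 + \frac{17D^2}{4}]$ with a bounded density, so $z^{\alpha/2} f_{Z_{4}}(z)$ is bounded on a compact set and the integral converges. Positivity holds because $f_{Z_{4}}$ is a legitimate PDF, positive on a set of positive Lebesgue measure, and $z^{\alpha/2} > 0$ there; moreover $\omega_2 = \frac{2^{\bar R}-1}{\alpha_2 - \alpha_1(2^{\bar R}-1)}$ is a fixed positive constant under the standing feasibility assumption $\alpha_2 > \alpha_1(2^{\bar R}-1)$ (outside this regime the OP is identically one and the diversity order is trivially zero, which is consistent).

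There is no real obstacle here; the corollary is an immediate consequence of the $\rho^{-1}$ scaling in Proposition \ref{proposition-pin-noma}. The one remark I would add for interpretation is that this value coincides with the NLoS-user diversity order in CASS (Corollary \ref{corollary-conv-noma}): relocating the pinching antenna alters only the large-scale path-loss distribution of the NLoS link, hence it affects the constant $\tilde{c}_4$ but not the exponent, whereas the Rayleigh small-scale fading, which governs the diversity order, is left unchanged.
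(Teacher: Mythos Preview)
Your proposal is correct and matches the paper's approach: the corollary is stated there without a separate proof, being an immediate consequence of the $\rho^{-1}$ scaling in Proposition~\ref{proposition-pin-noma}. Your additional check that $\tilde{c}_4$ is finite and strictly positive is a welcome clarification the paper leaves implicit.
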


\subsection{Performance Comparison}
\subsubsection{OMA}
According to the \eqref{Conv-U1-OMA-OP}, \eqref{eq_conv-oma-high-snr-u2}, \eqref{Pin-U1-OMA-OP} and \eqref{eq_pin-oma-high-snr-u2}, the performance gain of PASS compared to CASS with OMA is given by
\begin{equation}
    \begin{split}
    &\Delta_1^{OMA} = \mathbb{P}_{1}^{Conv,OMA} - \mathbb{P}_1^{Pin,OMA} \\
    &=
    \begin{cases}
        0 ,&z < d^2 ,\\ 
        \frac{2\sqrt{\zeta}}{D} - \frac{\pi \zeta}{D^2} ,& d^2 \leq z < d^2 + \frac{D^2}{4}, \\ 
        1 - \left(\frac{1}{D^2}\left(\hbar - \pi\zeta \right)\right) ,& d^2 + \frac{D^2}{4} \leq z < d^2 + \frac{D^2}{2} ,\\
        0 ,&  d^2 + \frac{D^2}{2} \leq z,
    \end{cases}
    \end{split}
\end{equation}
and 
\begin{equation}
\begin{split}
\Delta_2^{OMA,\infty}=& \mathbb{P}_{2}^{Pin,OMA,\infty} - \mathbb{P}_2^{Conv,OMA,\infty}
= \tilde{c}_5 \rho^{-1}  \approx 0,
    \end{split}
\end{equation}
where $\tilde{c}_5 = \tilde{c}_3-\tilde{c}_1$.

\begin{corollary}\label{corollary-delta-oma}
For OMA, the performance of $\mathrm{U}_1$ in the PASS is significantly enhanced compared with the CASS, while the performance of $\mathrm{U_2}$ is degraded slightly, especially in the high-SNR regime.
\end{corollary}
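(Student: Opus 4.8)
The plan is to treat the two assertions separately: the claim about $\mathrm{U}_1$ is essentially immediate from a sample-path comparison, while the claim about $\mathrm{U}_2$ rests on the high-SNR expansions and diversity orders already established.

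For $\mathrm{U}_1$, I would start from $\mathbb{P}_1^{Conv,OMA} = 1 - F_{Z_1}(a)$ (Theorem~\ref{theorem-conv-oma}) and $\mathbb{P}_1^{Pin,OMA} = 1 - F_{Z_3}(a)$ (Theorem~\ref{theorem-pin-oma}) with the common threshold $a$, so that $\Delta_1^{OMA} = F_{Z_3}(a) - F_{Z_1}(a)$. The key observation is a pointwise domination: for every realization of the user position, $Z_3 = |\psi_1^{pin}-\psi_1|^2 = y_1^2 + d^2$ while $Z_1 = |\psi_0-\psi_1|^2 = x_1^2 + y_1^2 + d^2$, hence $Z_3 \le Z_1$ surely and therefore $F_{Z_3}(z) \ge F_{Z_1}(z)$ for all $z$, which already gives $\Delta_1^{OMA} \ge 0$. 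The \emph{significant} gain then comes from the support mismatch: by Lemmas~\ref{lemma-z3} and \ref{lemma-z1}, $Z_3$ is confined to $[d^2, d^2+\tfrac{D^2}{4}]$ whereas $Z_1$ reaches $d^2+\tfrac{D^2}{2}$, so for $a$ in the intermediate band $[d^2+\tfrac{D^2}{4},\,d^2+\tfrac{D^2}{2})$ --- i.e. over a whole range of moderate SNRs --- one has $F_{Z_3}(a) = 1$ but $F_{Z_1}(a) < 1$, so $\Delta_1^{OMA} = 1 - F_{Z_1}(a)$ and can be close to one. Inserting the piecewise CDFs from the two lemmas reproduces the displayed formula for $\Delta_1^{OMA}$ and shows it vanishes for $a \le d^2$ and for $a \ge d^2+\tfrac{D^2}{2}$ and is strictly positive in between.

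For $\mathrm{U}_2$, I would invoke Propositions~\ref{proposition-conv-oma} and \ref{proposition-pin-oma}, which give $\mathbb{P}_2^{Conv,OMA,\infty} = \tilde c_1\rho^{-1}$ and $\mathbb{P}_2^{Pin,OMA,\infty} = \tilde c_3\rho^{-1}$ with $\tilde c_1 = \omega_1\,\mathbb{E}[Z_2^{\alpha/2}]$ and $\tilde c_3 = \omega_1\,\mathbb{E}[Z_4^{\alpha/2}]$; both are finite because $Z_2$ and $Z_4$ are compactly supported (Lemmas~\ref{lemma-z2} and \ref{lemma-R2-pin}). Hence $\Delta_2^{OMA,\infty} = \tilde c_5\rho^{-1}$ with $\tilde c_5 = \tilde c_3 - \tilde c_1$ finite, so the gap is $O(\rho^{-1}) \to 0$; combined with Corollaries~\ref{corollary-conv-oma} and \ref{corollary-pin-oma}, which give $\mathcal{D}_2^{Conv,OMA} = \mathcal{D}_2^{Pin,OMA} = 1$, this says the two outage curves have the same high-SNR slope on a log--log scale and the vertical gap between them vanishes --- i.e. the penalty paid by $\mathrm{U}_2$ is only first order in $\rho^{-1}$, which is precisely ``degraded slightly, especially in the high-SNR regime.'' To pin down the sign $\tilde c_5 > 0$ (a genuine degradation, not a gain) I would first dispose of the case $\alpha = 2$, where $Z_4 - Z_2 = x_1^2 - 2x_1 x_2$ and $x_1$ is zero-mean and independent of $x_2$, so $\tilde c_5 = \omega_1(\mathbb{E}[Z_4] - \mathbb{E}[Z_2]) = \omega_1\,\mathbb{E}[x_1^2] = \omega_1 D^2/12 > 0$; for larger $\alpha$ one then compares the $\tfrac{\alpha}{2}$-th moments of $Z_4$ and $Z_2$.

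The easy half is the $\mathrm{U}_1$ claim, where the pointwise inequality $Z_3 \le Z_1$ makes $\Delta_1^{OMA}\ge0$ immediate and the support mismatch supplies its magnitude. The delicate point is $\tilde c_5 > 0$ for general path-loss exponent: $Z_4$ and $Z_2$ are not pointwise ordered (the sign of $x_1^2 - 2x_1 x_2$ varies with $x_1$), so establishing $\mathbb{E}[Z_4^{\alpha/2}] > \mathbb{E}[Z_2^{\alpha/2}]$ requires a convex-order / Jensen-type argument, or a direct estimate of $\int z^{\alpha/2} f_{Z_2}(z)\,dz$ versus $\int z^{\alpha/2} f_{Z_4}(z)\,dz$ across their several piecewise branches; this is the step I expect to cost the most effort. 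Absent it, ``degraded slightly'' is still fully justified by the equality of diversity orders together with the $\rho^{-1}$ decay of $\Delta_2^{OMA,\infty}$, the sign being corroborated by the numerical results.
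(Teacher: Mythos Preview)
Your proposal is correct and, if anything, more thorough than the paper's own proof. The paper's argument is essentially two lines: it states that ``every term of $\Delta_1^{OMA}$ is greater than or equal to zero'' (by inspection of the displayed piecewise expression) and that $\Delta_2^{OMA,\infty}\to 0$ as $\rho\to\infty$; that is the entire proof.

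Your route for $\mathrm{U}_1$ is genuinely different and more conceptual: instead of checking nonnegativity branch by branch, you observe the sample-path domination $Z_3 = y_1^2+d^2 \le x_1^2+y_1^2+d^2 = Z_1$, which immediately yields $F_{Z_3}\ge F_{Z_1}$ and hence $\Delta_1^{OMA}\ge 0$ without ever writing out the CDFs. This buys a cleaner explanation of \emph{why} the gain is nonnegative and, via the support mismatch $[d^2,d^2+\tfrac{D^2}{4}]$ versus $[d^2,d^2+\tfrac{D^2}{2}]$, also explains where the gain is largest. For $\mathrm{U}_2$ you and the paper do the same thing (invoke Propositions~\ref{proposition-conv-oma} and \ref{proposition-pin-oma} to get $\Delta_2^{OMA,\infty}=\tilde c_5\rho^{-1}\to 0$), but you go further in attempting to pin down the sign $\tilde c_5>0$; the paper does not prove this sign at all and leaves the ``degraded'' direction to the numerical results, so your caveat about the $\alpha>2$ case needing a convex-order argument is a fair assessment of what is actually missing, not a gap in your reproduction of the paper's proof.
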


\begin{proof}\label{proof-Delta}
Obviously, for $\mathrm{U}_1$, every term of $\Delta_1^{OMA}$ is greater than or equal to zero. 
For $\mathrm{U}_2$ in the high-SNR regime, when $\rho \to \infty$, we have $\Delta_2^{OMA,\infty} \to 0$. The proof is complete. 
\end{proof}

\subsubsection{NOMA}
According to the \eqref{Conv-U1-NOMA-OP}, \eqref{eq_conv-noma-high-snr-u2}, \eqref{Pin-U1-NOMA-OP} and \eqref{eq_pin_noma_high-snr-u2}, the performance gain of PASS compared to CASS with NOMA is given by
\begin{equation}
    \begin{split}
    &\Delta_1^{NOMA} = \mathbb{P}_{1}^{Conv,NOMA} - \mathbb{P}_1^{Pin,NOMA}\\
    &=
    \begin{cases}
        0, &z < d^2 ,\\ 
        \frac{2\sqrt{\zeta}}{D} - \frac{\pi \zeta}{D^2} ,& d^2 \leq z < d^2 + \frac{D^2}{4} ,\\ 
        1 - (\frac{1}{D^2}\left(\hbar - \pi\zeta \right)) ,& d^2 + \frac{D^2}{4} \leq z < d^2 + \frac{D^2}{2} ,\\
        0 ,&  d^2 + \frac{D^2}{2} \leq z,
    \end{cases}
    \end{split}
\end{equation}
and 
\begin{equation}
\begin{split}
\Delta_2^{NOMA,\infty}&=\mathbb{P}_{2}^{Pin,NOMA,\infty} - \mathbb{P}_2^{Conv,NOMA,\infty}\\
&= \tilde{c}_6 \rho^{-1} \approx 0,
\end{split}
\end{equation}
where $\tilde{c}_6 = \tilde{c}_4 - \tilde{c}_2$.

\begin{corollary}\label{corollary-delta-noma}
For NOMA, the performance of $\mathrm{U}_1$ in the PASS is significantly enhanced compared with the CASS, while the performance of $\mathrm{U_2}$ is degraded slightly, especially in the high-SNR regime.
\end{corollary}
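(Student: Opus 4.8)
I would prove Corollary~\ref{corollary-delta-noma} by the same two-user split used for Corollary~\ref{corollary-delta-oma}, the NOMA thresholds entering the outage expressions exactly as the OMA ones do. By \eqref{Conv-U1-NOMA-OP} and \eqref{Pin-U1-NOMA-OP} one has $\mathbb{P}_1^{Conv,NOMA}=1-F_{Z_1}(b)$ and $\mathbb{P}_1^{Pin,NOMA}=1-F_{Z_3}(b)$ with the \emph{common} threshold $b=\frac{\alpha_1\eta\rho}{2^{\bar R}-1}$, so $\Delta_1^{NOMA}=F_{Z_3}(b)-F_{Z_1}(b)$. The decisive point is geometric: for every realization of the user locations the conventional antenna sits at $[0,0,d]$ while the pinching antenna is relocated to $[x_1,0,d]$, whence $Z_3=y_1^2+d^2\le x_1^2+y_1^2+d^2=Z_1$ surely; pointwise domination then forces $F_{Z_3}(z)\ge F_{Z_1}(z)$ for all $z$, i.e.\ $\Delta_1^{NOMA}\ge0$. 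Equivalently one can check the piecewise formula for $\Delta_1^{NOMA}$ branch by branch: on $d^2\le z<d^2+\tfrac{D^2}{4}$ the term $\tfrac{2\sqrt\zeta}{D}-\tfrac{\pi\zeta}{D^2}$ is $\ge0$ since there $\sqrt\zeta<\tfrac D2$ gives $\tfrac{\pi\sqrt\zeta}{2D}<\tfrac{\pi}{4}<1$, and on the next branch the term equals $1-F_{Z_1}(z)\ge0$ by Lemma~\ref{lemma-z1}. To justify ``significantly'' I would invoke Proposition~\ref{proposition-pin-noma}: since $Z_3\le d^2+\tfrac{D^2}{4}$ is bounded, $\mathbb{P}_1^{Pin,NOMA}$ collapses to exactly $0$ once $\rho$ is large enough, and it does so at a strictly smaller SNR than $\mathbb{P}_1^{Conv,NOMA}$ (whose variable $Z_1$ reaches up to $d^2+\tfrac{D^2}{2}$), so over the intermediate SNR band the gap $\Delta_1^{NOMA}$ is non-negligible.

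\textbf{User~2.} Here I would start from the high-SNR forms \eqref{eq_conv-noma-high-snr-u2} and \eqref{eq_pin_noma_high-snr-u2}, namely $\mathbb{P}_2^{Conv,NOMA,\infty}=\tilde c_2\rho^{-1}$ and $\mathbb{P}_2^{Pin,NOMA,\infty}=\tilde c_4\rho^{-1}$ with $\tilde c_2=\omega_2\int_0^\infty z^{\frac\alpha2}f_{Z_2}(z)\,dz$ and $\tilde c_4=\omega_2\int_0^\infty z^{\frac\alpha2}f_{Z_4}(z)\,dz$, so that $\Delta_2^{NOMA,\infty}=\tilde c_6\rho^{-1}$ with $\tilde c_6=\tilde c_4-\tilde c_2$. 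Since $\tilde c_6$ is a finite constant, $\Delta_2^{NOMA,\infty}=O(\rho^{-1})\to0$ as $\rho\to\infty$; combined with the fact that both schemes keep diversity order one for $\mathrm{U}_2$ (Corollaries~\ref{corollary-conv-noma} and~\ref{corollary-pin-noma}), this is precisely the ``slightly, especially in the high-SNR regime'' assertion. For the \emph{sign} of the residual gap — a mild degradation rather than a gain — I would show $\int_0^\infty z^{\frac\alpha2}f_{Z_4}(z)\,dz\ge\int_0^\infty z^{\frac\alpha2}f_{Z_2}(z)\,dz$, the intuition being that relocating the antenna inside room~1 toward $\mathrm{U}_1$ leaves it, on average, farther from room~2 than the central position: with $Z_2=x_2^2+y_2^2+d^2$, $Z_4=(x_1-x_2)^2+y_2^2+d^2$ and $x_1\sim\mathcal U[-\tfrac D2,\tfrac D2]$, $x_2\sim\mathcal U[\tfrac D2,\tfrac{3D}{2}]$ independent, one has e.g.\ $\int_0^\infty z f_{Z_4}(z)\,dz-\int_0^\infty z f_{Z_2}(z)\,dz=\mathbb{E}[(x_1-x_2)^2]-\mathbb{E}[x_2^2]=\tfrac{7D^2}{6}-\tfrac{13D^2}{12}=\tfrac{D^2}{12}>0$, and a direct moment computation from the densities in Lemmas~\ref{lemma-z2} and~\ref{lemma-R2-pin} lifts this inequality to the $\tfrac\alpha2$-th moment.

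\textbf{Main obstacle.} All the $\mathrm{U}_1$ steps and the $\rho^{-1}$-decay of the $\mathrm{U}_2$ gap are immediate from results already established, so the only genuinely delicate step is certifying the \emph{sign} of $\tilde c_6$: comparing the fractional moments $\int_0^\infty z^{\alpha/2}f_{Z_4}(z)\,dz$ and $\int_0^\infty z^{\alpha/2}f_{Z_2}(z)\,dz$ requires working through the multi-branch densities of Lemmas~\ref{lemma-z2} and~\ref{lemma-R2-pin} for a general path-loss exponent $\alpha$. I would therefore anchor the $\mathrm{U}_2$ conclusion on the $\rho^{-1}$ scaling — which already delivers ``slight'' and ``vanishing at high SNR'' — and use the mean-distance comparison only to identify the direction of the small residual gap.
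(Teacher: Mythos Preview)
Your proof is correct and shares the paper's skeleton: show $\Delta_1^{NOMA}\ge0$ and $\Delta_2^{NOMA,\infty}\to0$, then refer back to the OMA case. The paper's own argument is in fact just the one-line ``Similar to the proof of Corollary~\ref{corollary-delta-oma}'', which in turn amounts to ``every branch of $\Delta_1$ is nonnegative'' and ``$\Delta_2^{\infty}=\tilde c_6\rho^{-1}\to0$''.

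Where you diverge is in rigor and depth. Your pointwise domination $Z_3=y_1^2+d^2\le x_1^2+y_1^2+d^2=Z_1$ is a cleaner route to $\Delta_1^{NOMA}\ge0$ than the branch-by-branch inspection the paper implicitly relies on, and it makes the inequality structural rather than a case check. More substantively, you attempt to pin down the \emph{sign} of $\tilde c_6$ via the moment comparison $\mathbb{E}[(x_1-x_2)^2]-\mathbb{E}[x_2^2]=D^2/12>0$; the paper never establishes this and simply asserts ``degraded'' based on the numerical plots. So your identification of the $\alpha/2$-moment comparison as the only genuine obstacle is accurate --- and it is an obstacle the paper itself sidesteps rather than resolves. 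Your proposal therefore strictly subsumes the paper's proof.
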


\begin{proof}
Similar to the proof of Corollary~\ref{corollary-delta-oma}.
\end{proof}

\section{Numerical Results}\label{Sec_Numerical_Results}
In this section, numerical results are presented to evaluate system performance. Meanwhile, Monte-Carlo simulations are conducted to verify the accuracy. The results reflect the superiority of PASS over CASS in terms of outage probability. Specifically, the parameters are set as shown in Table~\ref{Tab_Parameters}. Specifically, $n$ is a trade-off parameter of accuracy and complexity. The approximation accuracy increases as $n$ increases. By numerical simulations, we find that the setting $n = 100$ can generally achieve an approximation error of $10^{-10}$ which is negligible.
\begin{table}
\centering
\caption{Parameters setting}
\begin{tabular}{|p{0.24\textwidth}|p{0.16\textwidth}|}
\hline
Side length of the room & $D = 20m$ or $30m$\\
\hline
Height of the waveguide& $d = 5m$\\
\hline
Carrier frequency& $f_{c} = 10\ GHz$\\
\hline
Bandwidth& $B = 1\ MHz$\\
\hline
Path-loss exponent& $\alpha = 6$\\
\hline
The power allocation coefficients for $\mathrm{U}_1$ and $\mathrm{U}_2$ (NOMA)&  $\alpha_1 = 0.1$ and $\alpha_2 = 0.9$\\
\hline
Target data-rate & $\bar{R} = 1\ Mbps$\\
\hline
Number of points for Chebyshev-Gauss quadratures& $n=100$\\
\hline
\end{tabular}\label{Tab_Parameters}
\end{table}

\subsection{OPs of the CASS and PASS}
\begin{figure}
\centering
\includegraphics[width=\linewidth]{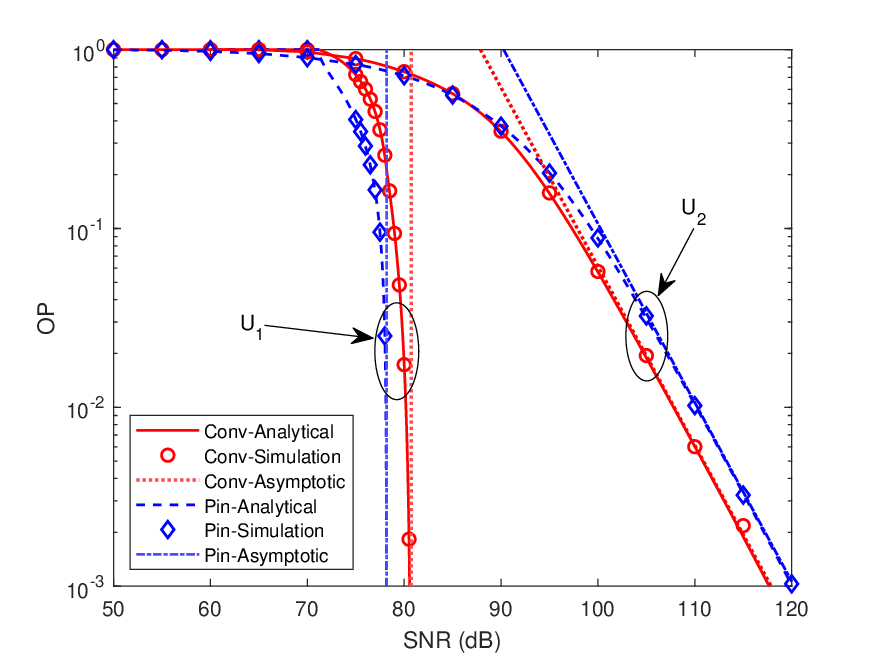}
\caption{OPs of the CASS and PASS with OMA.}
\label{fig_oma}
\end{figure}

In Fig.~\ref{fig_oma}, the OPs versus the transmit SNR in the CASS and PASS with OMA are plotted, where the CASS is regarded as the benchmark.
First, the OP analyses of $\mathrm{U}_1$ and $\mathrm{U}_2$ for OMA are accurate, since all simulation results coincide with the corresponding analytical results in Theorem~\ref{theorem-conv-oma} and Theorem~\ref{theorem-pin-oma}.
Next, it is observed that the OPs of $\mathrm{U}_1$ in the CASS and PASS with OMA become zero for $\text{SNR}\geq 81\ \text{dB}$ and $78\ \text{dB}$, respectively, as analyzed in Proposition~\ref{proposition-conv-oma} and Proposition~\ref{proposition-pin-oma}.
Then, it is observed that both OPs of $\mathrm{U}_2$ gradually approach their respective asymptotic curves derived in Proposition~\ref{proposition-conv-oma} and Proposition~\ref{proposition-pin-oma}, which validates our analysis. Furthermore, we find that the diversity orders of $\mathrm{U}_2$ in these two systems are one by observing the slopes, which is consistent with Corollary~\ref{corollary-conv-oma} and Corollary \ref{corollary-pin-oma}.

\begin{figure}
\centering
\includegraphics[width=\linewidth]{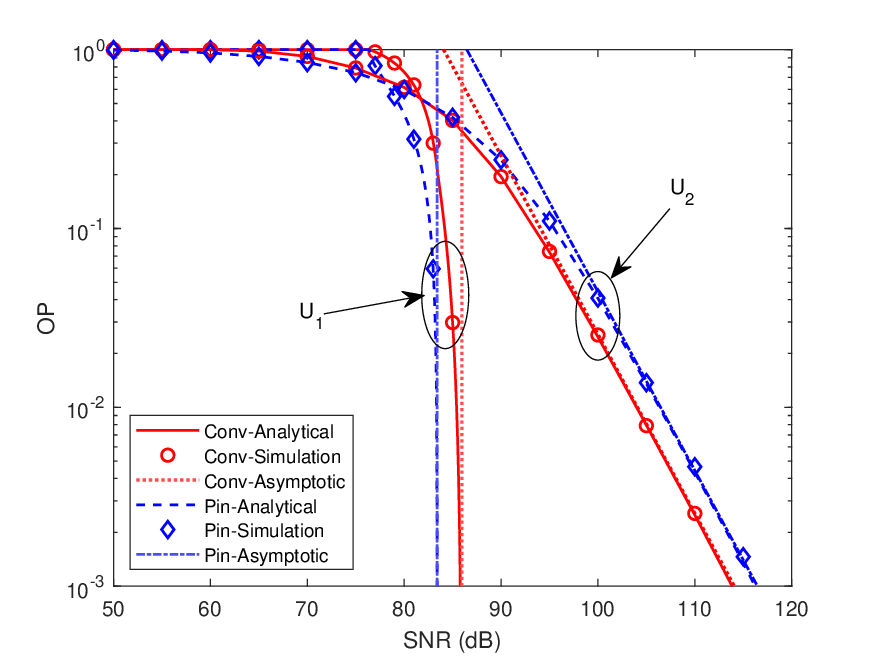}
\caption{OPs of the CASS and PASS with NOMA.}
\label{fig_noma}
\end{figure}

In Fig.~\ref{fig_noma}, the OPs versus the transmit SNR in the CASS and PASS with NOMA are plotted, where the CASS is regarded as the benchmark.
First, the OP analyses of $\mathrm{U}_1$ and $\mathrm{U}_2$ for NOMA are accurate, since all simulation results coincide with the corresponding analytical results in Theorem~\ref{theorem-conv-noma} and Theorem~\ref{theorem-pin-noma}.
Next, it is observed that the OPs of $\mathrm{U}_1$ in the CASS and PASS with NOMA become zero for $\text{SNR}\geq 86\ \text{dB}$ and $83\ \text{dB}$, respectively, as analyzed in Proposition~\ref{proposition-conv-noma} and Proposition~\ref{proposition-pin-noma}.
Then, it is observed that both OPs of $\mathrm{U}_2$ gradually approach their respective asymptotic curves derived in Proposition~\ref{proposition-conv-noma} and Proposition~\ref{proposition-pin-noma}, which validates our analysis. Furthermore, we find that the diversity orders of $\mathrm{U}_2$ in these two systems are one by observing the slopes, which is consistent with Corollary~\ref{corollary-conv-noma} and Corollary \ref{corollary-pin-noma}.

\begin{figure}
\centering
\includegraphics[width=\linewidth]{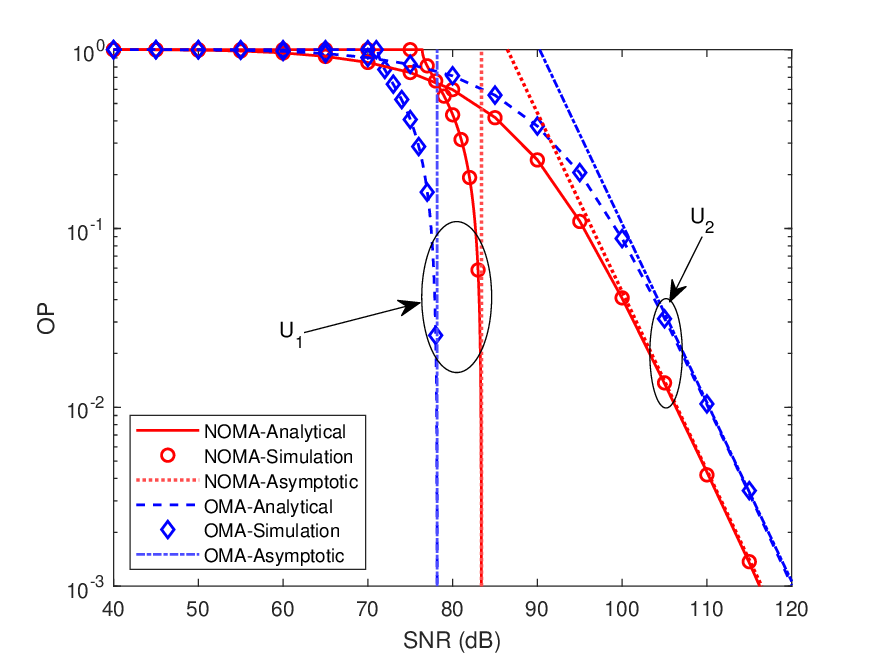}
\caption{OP comparison for OMA and NOMA in the PASS.}
\label{fig_pin-oma-noma}
\end{figure}

In Fig.~\ref{fig_pin-oma-noma}, the OPs versus the transmit SNR in the PASS with OMA and NOMA are plotted for comparisons. 
We observe that $\mathrm{U}_1$ in the NOMA system has worse performance than that in the OMA system. On the other hand, $\mathrm{U}_2$ in the NOMA system is observed to perform better than in the OMA system. 
It demonstrates that NOMA still has better fairness than OMA in PASS, which is consistent with the previous study of NOMA in CASS.

\subsection{The Performance Comparison of Single User for OMA}
\begin{figure}
\centering
\includegraphics[width=\linewidth]{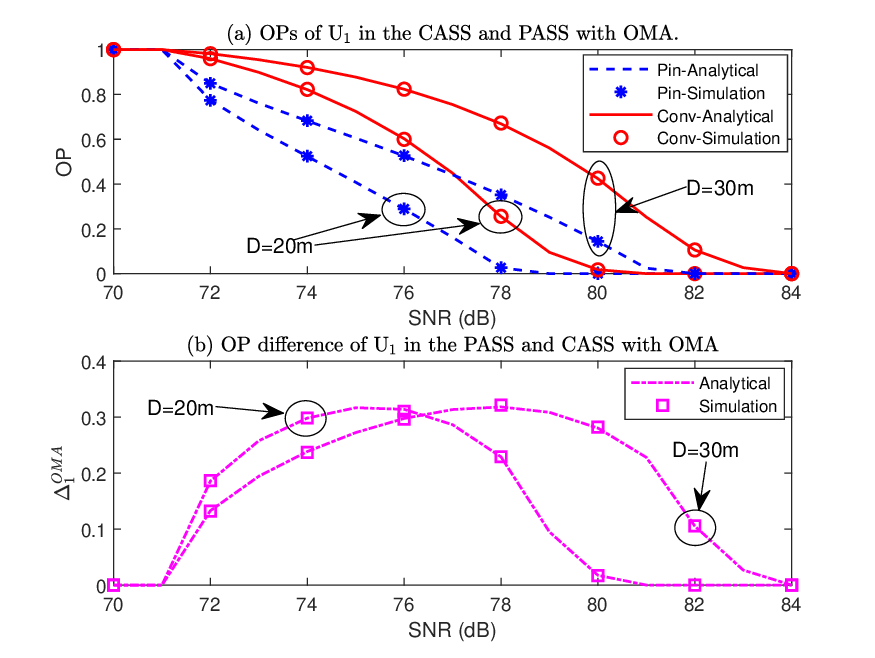}
\caption{The difference of $\mathrm{U}_1$'s OPs in the CASS and PASS with OMA.}
\label{fig_u1-oma}
\end{figure}

In Fig.~\ref{fig_u1-oma}(a), it is observed that both OPs of $\mathrm{U}_1$ in the CASS and PASS with OMA gradually approach zero derived from \eqref{eq_conv-oma-high-snr-u1} and \eqref{eq_pin-oma-high-snr-u1}, which validates our analysis.
We also observe that the OPs increase as the side length of the room increases. The reason is that the increase in side length makes the average channel gain of users worse. 
Furthermore, to highlight the performance difference of $\mathrm{U}_1$ in the CASS and PASS, we plot the difference of OPs in Fig.~\ref{fig_u1-oma}(b). It is observed that $\Delta_1^{OMA}=\mathbb{P}_{1}^{Conv,OMA}-\mathbb{P}_{1}^{Pin,OMA}$ is always greater than zero and is maximized in the middle SNR regime.
The change of the side length of the room only makes the difference shift along the $x$-axis.
This reflects the superiority of PASS over CASS.

\begin{figure}
\centering
\includegraphics[width=\linewidth]{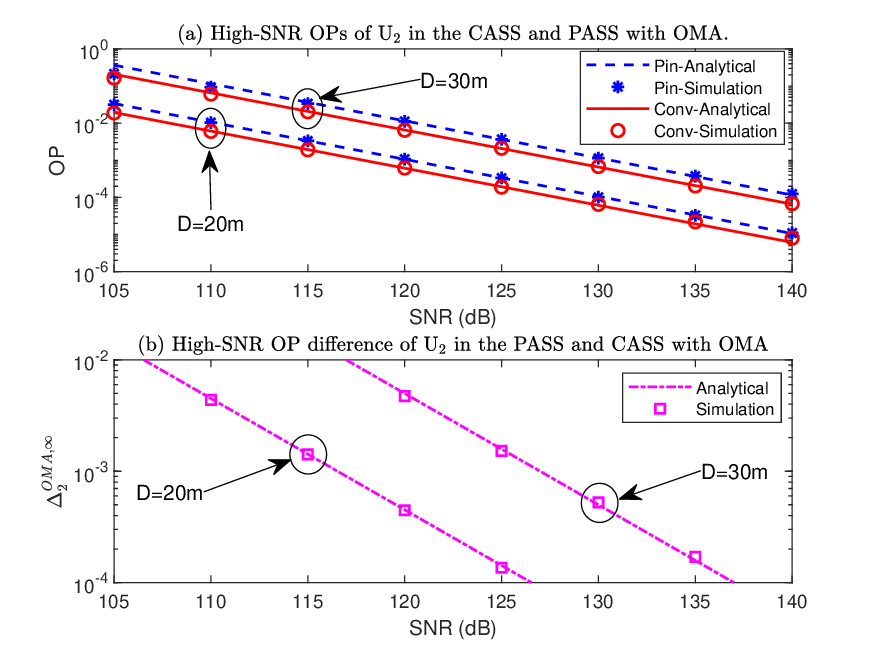}
\caption{The difference of the high-SNR approximations of $\mathrm{U}_2$'s OPs in the CASS and PASS with OMA.}
\label{fig_u2-oma}
\end{figure}

We further plot the high-SNR approximation curves of $\mathrm{U}_2$'s OPs in the CASS and PASS with OMA in Fig.~\ref{fig_u2-oma}(a). It is observed that both OPs gradually approach their respective asymptotic curves derived from \eqref{eq_conv-oma-high-snr-u2} and \eqref{eq_pin-oma-high-snr-u2}.
We also observe that the OPs increase as the side length of the room increases. The reason is that the increase in side length makes the average channel gain of users worse.
In Fig.~\ref{fig_u2-oma}(b), by observing the difference of $\mathrm{U}_2$'s OPs in the CASS and PASS in the high-SNR regime, i.e., $\Delta_2^{OMA,\infty}=\mathbb{P}_{2}^{Pin,OMA,\infty}-\mathbb{P}_{2}^{Conv,OMA,\infty}$, we find that the difference is miniscule in the high-SNR regime. We also observe that the change of the side length of the room only makes the difference shift along the $x$-axis. Thereby, we can conclude that for OMA, the performance of $\mathrm{U}_1$ in the PASS is significantly enhanced compared with the CASS, while the performance of $\mathrm{U_2}$ is slightly degraded, which is consistent with Corollary~\ref{corollary-delta-oma}. 

\subsection{The Performance Comparison of Single User for NOMA}
\begin{figure}
\centering
\includegraphics[width=\linewidth]{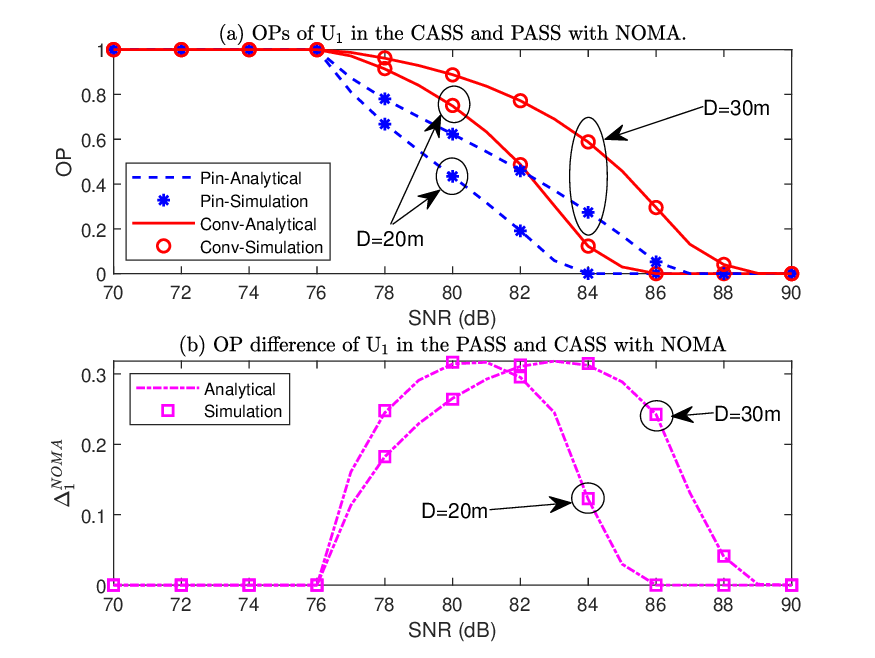}
\caption{The difference of $\mathrm{U}_1$'s OPs in the CASS and PASS with NOMA.}
\label{fig_u1-noma}
\end{figure}

In Fig.~\ref{fig_u1-noma}(a), it is observed that both OPs of $\mathrm{U}_1$ in the CASS and PASS with NOMA gradually approach zero derived from \eqref{eq_conv-noma-high-snr-u1} and \eqref{eq_pin_noma_high-snr-u1}. We also observe that the OPs increase as the side length of the room increases. The reason is that the increase in side length makes the average channel gain of users worse. 
Furthermore, to highlight the performance difference of $\mathrm{U}_1$ in the CASS and PASS, we plot the difference of OPs in Fig.~\ref{fig_u1-noma}(b). It is observed that $\Delta_1^{NOMA}=\mathbb{P}_{1}^{Conv,NOMA}-\mathbb{P}_{1}^{Pin,NOMA}$ is always greater than zero and is maximized in the middle SNR regime. The change of the side length of the room only makes the difference shift along the $x$-axis. This reflects the superiority of PASS over CASS.

\begin{figure}
\centering
\includegraphics[width=\linewidth]{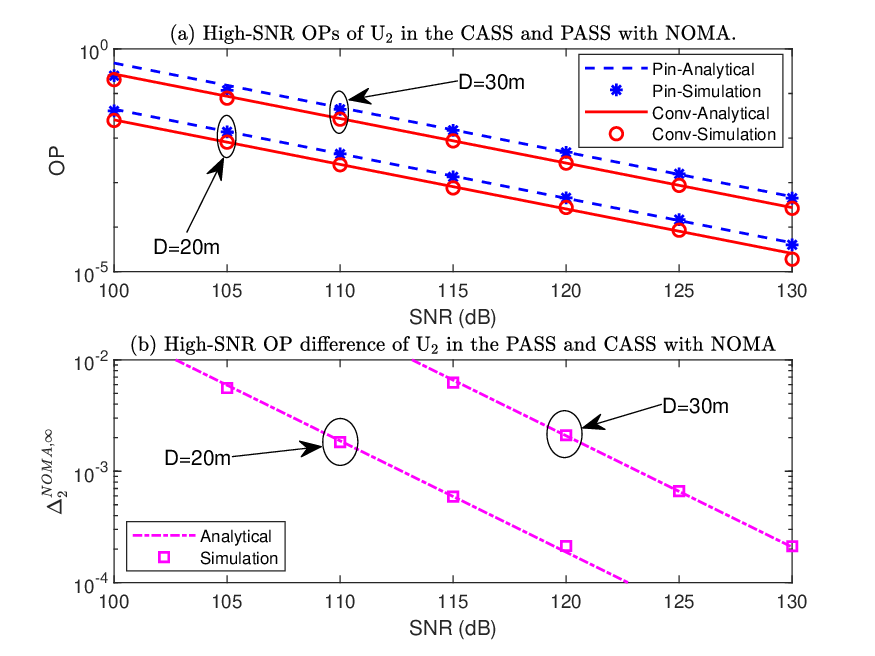}
\caption{The difference of the high-SNR approximations of $\mathrm{U}_2$'s OPs in the CASS and PASS with NOMA.}
\label{fig_u2-noma}
\end{figure}

We further plot the high-SNR approximation curves of $\mathrm{U}_2$'s OPs in the CASS and PASS with NOMA in Fig.~\ref{fig_u2-noma}(a). It is observed that both OPs of $\mathrm{U}_2$ for NOMA gradually approach their respective asymptotic curves derived from \eqref{eq_conv-noma-high-snr-u2} and \eqref{eq_pin_noma_high-snr-u2}.
We also observe that the OPs increase as the side length of the room increases. The reason is that the increase in side length makes the average channel gain of users worse.
In Fig.~\ref{fig_u2-noma}(b), by observing $\Delta_2^{NOMA,\infty}=\mathbb{P}_{2}^{Pin,NOMA,\infty}-\mathbb{P}_{2}^{Conv,NOMA,\infty}$, we find that the difference is diminishing in the high-SNR regime. We also observe that the change of the side length of the room only makes the difference shift along the $x$-axis. Thereby, we can conclude that for NOMA, the performance of $\mathrm{U}_1$ is significantly enhanced, while the performance of $\mathrm{U_2}$ is degraded slightly, which is consistent with Corollary~\ref{corollary-delta-noma}.

\section{Conclusion}\label{Sec_Conclusion}
This paper presented a comprehensive performance evaluation of CASS and PASS employing OMA and NOMA schemes.
Analytical derivations and numerical simulations consistently demonstrate that PASS achieve remarkable improvements in OP compared to CASS.
These gains underscore the adaptability and efficiency of PASS in overcoming propagation challenges, particularly in heterogeneous environments with mixed LoS/NLoS conditions. 
Future work could explore the optimal positions of PAs, the trade-off between the number of PAs and cost, and the performance analysis for multiple-antenna PASS, to further advance this technology.

\begin{appendices}
\section{Proof of Theorem \ref{theorem-conv-oma}}\label{Appen-thro-1}
\renewcommand{\theequation}{\thesection.\arabic{equation}}
\setcounter{equation}{0}
First, we denote $Y=|h|^{2}$, and its PDF and CDF are given by $f_{Y}(y)=e^{-y}$ for $y \geq 0$ and $F_{Y}(y)=1-e^{-y}$ for $ y \geq 0$, respectively. Then, $\mathbb{P}_{1}^{Conv,OMA}$ can be derived as 
\begin{equation}\label{proof-Conv-R1-part1}
\begin{split}
 \mathbb{P}_{1}^{Conv,OMA} 
    &=\Pr\left(\frac{1}{M}\log_2\left(1+\frac{\eta \rho} {Z_1}\right)<\bar{R}\right)\\
    &=1-F_{Z_{1}}\left(\frac{\eta \rho}{2^{M\bar{R}}- 1}\right).
\end{split}
\end{equation}
Based on \eqref{CDF-Z1}, $\mathbb{P}_{1}^{Conv,OMA} $ is derived.

On the other hand, $\mathbb{P}_{2}^{Conv,OMA}$ can be derived as
\begin{equation}\label{proof-Conv-R2-part1}
\begin{split}
\mathbb{P}_{2}^{Conv,OMA}
&=\Pr \left( \frac{1}{M} \log_2 \left( 1 + \frac{\rho |h|^2}{Z_{2}^{\frac{\alpha}{2}}} \right) < \bar{R} \right)\\
&= \int_{0}^{\infty} \left( 1 - e^{\frac{-\left(2^{M \bar{R}} - 1\right) Z^{\frac{\alpha}{2}}}{\rho}} \right) f_{Z_2}(z) dz.
\end{split}
\end{equation}
Let $\varpi_{1}=\frac{2^{M\bar{R}}-1}{\rho}$, and we can derive $\mathbb{P}_{2}^{Conv,OMA}$ by \eqref{proof-Conv-R2-part2}.
\begin{figure*}[]
\begin{equation}\label{proof-Conv-R2-part2}
\begin{split}
& \mathbb{P}_{2}^{Conv,OMA}= 1 - \int_{0}^{\infty} e^{-\varpi_{1} z^{\frac{\alpha}{2}}} f_{Z_{2}}(z) dz
=1- \frac{1}{D^2}\underbrace{ \int_{d^2 + \frac{D^2}{4}}^{d^2 + \frac{D^2}{2}} e^{-\varpi_{1} z^{\frac{\alpha}{2}}} \left( \frac{\pi}{2} - \arcsin \left( \frac{D}{2\sqrt{z - d^2}} \right) \right) dz}_{\mathbb{J}_{1}}\\
&- \frac{1}{D^2}\underbrace{\int_{d^2 + \frac{D^2}{2}}^{d^2 + \frac{9D^2}{4}} e^{-\varpi_{1} z^{\frac{\alpha}{2}}} \left( \frac{\pi}{2} - \arcsin \left( \sqrt{1 - \frac{D^2}{4(z - d^2)}} \right) \right) dz}_{\mathbb{Q}_{1}}\\
&-\frac{1}{D^2} \underbrace{\int_{d^2 + \frac{9D^2}{4}}^{d^2 + \frac{5D^2}{2}} e^{-\varpi_{1} z^{\frac{\alpha}{2}}} \left( \arcsin \left( \frac{3D}{2\sqrt{z - d^2}} \right) + \arcsin \left( \frac{D}{2\sqrt{z - d^2}} \right) - \frac{\pi}{2} \right) dz}_{\mathbb{K}_{1}}.
\end{split}
\end{equation}
\end{figure*}
To perform Gaussian-Chebyshev numerical integration for each integration, we need a variable substitution to change the integration domain to [-1,1], i.e., letting $z = \frac{D^2}{8} t + \frac{3D^2}{8}$,
and we have
\begin{equation}\label{proof-Conv-R2-part3}
\begin{split}
\mathbb{J}_{1} 
     &= \int_{-1}^{1} \frac{D^2}{8} e^{-\varpi_{1} (\frac{D^2}{8} t + \frac{3D^2}{8})^{\frac{\alpha}{2}}} \left( \frac{\pi}{2} - \arcsin \left( \sqrt{\frac{2}{t + 3}} \right) \right) dt \\
  &= \sum_{i=1}^{n} \omega_i \mathbf{j}_{1}(t_i).
\end{split}
\end{equation}
Similarly, $\mathbb{Q}_{1} $ and $\mathbb{K}_{1}$ can be solved in the same way.
The proof is complete.

\section{Proof of Theorem \ref{theorem-conv-noma}}\label{Appen-thro-2}
\renewcommand{\theequation}{\thesection.\arabic{equation}}
\setcounter{equation}{0}
$\mathbb{P}_{1}^{Conv,NOMA}$ can be derived as 
\begin{equation}\label{proof-R1-Conv-NOMA-part1}
\begin{split}
\mathbb{P}_{1}^{Conv,NOMA}
    &=\Pr\left(\log_2\left(1+\frac{\alpha_{1}\eta \rho}{Z_{1}}\right)<\bar{R}\right)\\
    &=1-F_{Z_{1}}\left(\frac{\alpha_{1}\eta \rho}{2^{\bar{R}}-1}\right).
\end{split}
\end{equation}
Based on \eqref{CDF-Z1}, $\mathbb{P}_{1}^{Conv,NOMA}$ is derived.

On the other hand, $\mathbb{P}_{2}^{Conv,NOMA}$ can be derived as
\begin{equation}\label{proof-R2-NOMA-part1}
\begin{split}
\mathbb{P}_{2}^{Conv,NOMA}
&=\Pr \left(\log_2\left(1+\frac{\alpha_{2}\rho|h_{2}|^{2}z^{-\alpha/2}}{\alpha_{1}\rho|h_{2}|^{2}z^{-\alpha/2}+1}\right) < \bar{R} \right) \\
&= \Pr \left(  y <\frac{(\alpha_{2}-\alpha_{1}(2^{\bar{R}}-1))\rho}{(2^{\bar{R}}-1)z^{\alpha/2}}\right).
\end{split}
\end{equation}
Let $ \varpi_{2}= \frac{2^{\bar{R}}-1}{\rho(\alpha_{2}-\alpha_{1}(2^{\bar{R}}-1))}$, and we can derive $\mathbb{P}_{2}^{Conv,NOMA}$ by \eqref{proof-R2-NOMA-part2}.
\begin{figure*}[]
	\begin{equation}\label{proof-R2-NOMA-part2}
		\begin{split}
			& \mathbb{P}_{2}^{Conv,NOMA}
            = \int_{0}^{\infty} \int_{0}^{\varpi_{2} z^{\alpha/2}}f_{Z_{2}}(z) f_Y(y) dy dz
			=1- \frac{1}{D^2}\underbrace{ \int_{d^2 + \frac{D^2}{4}}^{d^2 + \frac{D^2}{2}} e^{-\varpi_{2} z^{\frac{\alpha}{2}}} \left( \frac{\pi}{2} - \arcsin \left( \frac{D}{2\sqrt{z - d^2}} \right) \right) dz}_{\mathbb{J}_{2}}\\
            &- \frac{1}{D^2}\underbrace{\int_{d^2 + \frac{D^2}{2}}^{d^2 + \frac{9D^2}{4}} e^{-\varpi_{2} z^{\frac{\alpha}{2}}} \left( \frac{\pi}{2} - \arcsin \left( \sqrt{1 - \frac{D^2}{4(z - d^2)}} \right) \right) dz}_{\mathbb{Q}_{2}}\\
			&-\frac{1}{D^2} \underbrace{ \int_{d^2 + \frac{9D^2}{4}}^{d^2 + \frac{5D^2}{2}} e^{-\varpi_{2} z^{\frac{\alpha}{2}}} \left( \arcsin \left( \frac{3D}{2\sqrt{z - d^2}} \right) + \arcsin \left( \frac{D}{2\sqrt{z - d^2}} \right) - \frac{\pi}{2} \right) dz}_{\mathbb{K}_{2}}.
		\end{split}
	\end{equation}
\end{figure*}
To perform Gaussian-Chebyshev numerical integration for each integration, we need a variable substitution to change the integration domain to [-1,1], i.e., letting $z = \frac{D^2}{8} t + \frac{3D^2}{8}$,
and we have
\begin{equation}\label{proof-R2-part3}
\begin{split}
\mathbb{J}_{2}
     &= \int_{-1}^{1} \frac{D^2}{8} e^{-\varpi_{2} (\frac{D^2}{8} t + \frac{3D^2}{8})^{\frac{\alpha}{2}}} \left( \frac{\pi}{2} - \arcsin \left( \sqrt{\frac{2}{t + 3}} \right) \right) dt \\
  &= \sum_{i=1}^{n} \omega_i \mathbf{j}_{2}(t_i).
\end{split}
\end{equation}
Similarly, $\mathbb{Q}_{2} $ and $\mathbb{K}_{2}$ can be solved in the same way.
The proof is complete.

\section{Proof of Theorem \ref{theorem-pin-oma}}\label{Appen-thro-3}
\renewcommand{\theequation}{\thesection.\arabic{equation}}
\setcounter{equation}{0}
First, we denote $Y=|h|^{2}$, and its PDF and CDF are given by $f_{Y}(y)=e^{-y}$ for $y \geq 0$ and $F_{Y}(y)=1-e^{-y}$ for $y \geq 0$, respectively. Then, $\mathbb{P}_{1}^{Conv,OMA}$ can be derived as 
\begin{equation}\label{proof-Pin-R1-part1}
\begin{split}
 \mathbb{P}_{1}^{Pin,OMA} 
    &=\Pr\left(\frac{1}{M}\log_2\left(1+\frac{\eta \rho}{Z_{3}}\right)<\bar{R}\right)\\
    &=1-F_{Z_{3}}\left(\frac{\eta \rho}{2^{M\bar{R}}-1}\right).
\end{split}
\end{equation}
Based on \eqref{CDF-z3}, $\mathbb{P}_{1}^{Pin,OMA} $ is derived.

On the other hand, $\mathbb{P}_{2}^{Pin,OMA}$ can be derived as
\begin{equation}\label{proof-Pin-R2-part1}
\begin{split}
\mathbb{P}_{2}^{Pin,OMA}
    &=\Pr \left( \frac{1}{M} \log_2 \left( 1 + \frac{\rho |h|^2}{Z_{4}^{\frac{\alpha}{2}} } \right) < \bar{R} \right)\\
&= \int_{0}^{\infty} \left( 1 - e^{\frac{-(2^{M \bar{R}} - 1) Z^{\frac{\alpha}{2}} }{\rho}} \right) f_{Z_4}(z) dz.
\end{split}
\end{equation}
Let $\varpi_{1}=\frac{2^{M\bar{R}}-1}{\rho}$, and we can derive $\mathbb{P}_{2}^{Pin,OMA}$ by \eqref{proof-Pin-R2-part2}.
\begin{figure*}[]
	\begin{equation}\label{proof-Pin-R2-part2}
		\begin{split}
			& \mathbb{P}_{2}^{Pin,OMA}= 1 - \int_{0}^{\infty} e^{-\varpi_{1} z^{\frac{\alpha}{2}}} f_{Z_{4}}(z) dz
			=1- \underbrace{\int_{d^2}^{d^2 + \frac{D^2}{4}} e^{-\varpi_{1} z^{\frac{\alpha}{2}}}\frac{\sqrt{\zeta}}{D^3}}_{\mathbb{J}_{3}} - 
            \underbrace{\int_{d^2 + \frac{D^2}{4}}^{d^2 + D^2} e^{-\varpi_{1} z^{\frac{\alpha}{2}}}\frac{1}{2D^2} dz}_{\mathbb{Q}_{3}} \\
            & - \underbrace{\int_{d^2 + D^2}^{d^2 + \frac{5D^2}{4}} e^{-\varpi_{1} z^{\frac{\alpha}{2}}}  \frac{D-4\sqrt{\varrho}+4D\left(\arctan \left(\frac{\sqrt{\varrho}}{D} \right) -\frac{\sqrt{\varrho}}{D} + \frac{D\sqrt{\varrho}}{\zeta}\right)}{2D^3}  dz}_{\mathbb{K}_3}
			-\underbrace{\int_{d^2 + \frac{5D^2}{4}}^{d^2 + 4D^2} e^{-\varpi_{1} z^{\frac{\alpha}{2}}} \frac{2\pi-1-4\arctan \left(\frac{\sqrt{\varsigma}}{D} \right)}{2D^2} dz}_{\mathbb{C}_{3}} \\
            &- \underbrace{ \int_{d^2 + 4D^2}^{d^2 + \frac{17D^2}{4}} e^{-\varpi_{1} z^{\frac{\alpha}{2}}} \left( \frac{2}{D\sqrt{\varsigma}} - \frac{1}{2D^2}+\frac{\sqrt{\tau}}{D^3} + \frac{2}{D^2}\left(
              \left(\arctan \left(\frac{D}{\sqrt{\varsigma}} \right) - \arctan \left(\frac{\sqrt{\tau}}{2D} \right) + 
              \frac{7D\sqrt{\tau}}{4\zeta} \right)
              -\frac{D}{\sqrt{\varsigma}}-\frac{7D\sqrt{\tau}}{4\zeta} 
              \right) \right) dz}_{\mathbb{V}_3}.
		\end{split}
	\end{equation}
\end{figure*}
To perform Gaussian-Chebyshev numerical integration for each integration, we need a variable substitution to change the integration domain to [-1,1], i.e., letting $z = \frac{D^2}{8} t + \frac{D^2}{8} + d^2$,
and we have
\begin{equation}\label{proof-Pin-R2-part3}
\begin{split}
\mathbb{J}_{3} 
     &= \int_{-1}^{1} \frac{D^2}{8} e^{-\varpi_{1} (\frac{D^2}{8} t + \frac{D^2}{8})^{\frac{\alpha}{2}}} \left( \frac{\sqrt{t+1}}{2\sqrt{2}D} \right) dt \\
  &= \sum_{i=1}^{n} \omega_i \mathbf{j}_{3}(t_i).
\end{split}
\end{equation}
Similarly, $\mathbb{Q}_{3} $, $\mathbb{K}_{3}$, $\mathbb{C}_3$, and $\mathbb{V}_3$ can be solved in the same way.
The proof is complete.

\section{Proof of Theorem \ref{theorem-pin-noma}}\label{Appen-thro-4}
\renewcommand{\theequation}{\thesection.\arabic{equation}}
\setcounter{equation}{0}
$\mathbb{P}_{1}^{Pin,NOMA}$ can be derived as 
\begin{equation}\label{proof-R1-Pin-NOMA-part1}
\begin{split}
\mathbb{P}_{1}^{Pin,NOMA}    &=\Pr\left(\log_2\left(1+\frac{\alpha_{1}\eta \rho}{Z_{3}}\right)<\bar{R}\right)\\
    &=1-F_{Z_{3}}\left(\frac{\alpha_{1}\eta \rho}{2^{\bar{R}}-1}\right).
\end{split}
\end{equation}
Based on \eqref{CDF-z3}, $\mathbb{P}_{1}^{Pin,NOMA} $ is derived.

On the other hand, $\mathbb{P}_{2}^{Pin,NOMA}$ can be derived as
\begin{equation}\label{proof-R2-NOMA-pin-part1}
\begin{split}
\mathbb{P}_{2}^{Pin,NOMA}
    &=\Pr \left(\log_2\left(1+\frac{\alpha_{2}\rho|h_{2}|^{2}z^{-\alpha/2}}{\alpha_{1} \rho|h_{2}|^{2}z^{-\alpha/2}+1}\right) < \bar{R} \right) \\
&= \Pr \left( y < \frac{(\alpha_{2}-\alpha_{1}(2^{\bar{R}}-1))P_{b}}{\sigma^2(2^{\bar{R}}-1)z^{\alpha/2}}\right).
\end{split}
\end{equation}
Let $ \varpi_{2}= \frac{2^{\bar{R}}-1}{\rho(\alpha_{2}-\alpha_{1}(2^{\bar{R}}-1))}$, and we can derive $\mathbb{P}_{2}^{Pin,NOMA}$ by \eqref{proof-R2-NOMA-pin-part2}.
\begin{figure*}[]
\begin{equation}\label{proof-R2-NOMA-pin-part2}
\begin{split}
& \mathbb{P}_{2}^{Pin,NOMA}
= \int_{0}^{\infty} \int_{0}^{\varpi_{2} z^{\alpha/2}}f_{Z_{4}}(z) f_Y(y) dy dz =1- \underbrace{\int_{d^2}^{d^2 + \frac{D^2}{4}} e^{-\varpi_{2} z^{\frac{\alpha}{2}}}\frac{\sqrt{\zeta}}{D^3}dz}_{\mathbb{J}_{4}} - \underbrace{\int_{d^2 + \frac{D^2}{4}}^{d^2 + D^2} e^{-\varpi_{2} z^{\frac{\alpha}{2}}}\frac{1}{2D^2} dz}_{\mathbb{Q}_{4}} \\
& - \underbrace{\int_{d^2 + D^2}^{d^2 + \frac{5D^2}{4}} e^{-\varpi_{2} z^{\frac{\alpha}{2}}}  \frac{D-4\sqrt{\varrho}+4D\left(\arctan \left(\frac{\sqrt{\varrho}}{D} \right) -\frac{\sqrt{\varrho}}{D} + \frac{D\sqrt{\varrho}}{\zeta}\right)}{2D^3} dz}_{\mathbb{K}_4} -\underbrace{\int_{d^2 + \frac{5D^2}{4}}^{d^2 + 4D^2} e^{-\varpi_{2} z^{\frac{\alpha}{2}}}  \frac{2\pi-1-4\arctan \left(\frac{\sqrt{\varsigma}}{D} \right)}{2D^2}  dz}_{\mathbb{C}_{4}} \\
&- \underbrace{\int_{d^2 + 4D^2}^{d^2 + \frac{17D^2}{4}} e^{-\varpi_{2} z^{\frac{\alpha}{2}}} \left( \frac{2}{D\sqrt{\varsigma}} - \frac{1}{2D^2}+\frac{\sqrt{\tau}}{D^3} + \frac{2}{D^2}\left(\left(\arctan \left(\frac{D}{\sqrt{\varsigma}} \right) - \arctan \left(\frac{\sqrt{\tau}}{2D} \right) + \frac{7D\sqrt{\tau}}{4\zeta} \right) -\frac{D}{\sqrt{\varsigma}}-\frac{7D\sqrt{\tau}}{4\zeta} \right) \right) dz}_{\mathbb{V}_4}.
\end{split}
\end{equation}
\end{figure*}
To perform Gaussian-Chebyshev numerical integration for each integration, we need a variable substitution to change the integration domain to [-1,1], i.e., letting
$z = \frac{D^2}{8} t + \frac{3D^2}{8}$,
and we have
\begin{equation}\label{proof-R2-pin-part3}
\begin{split}
\mathbb{J}_{4} 
     &= \int_{-1}^{1} \frac{D^2}{8} e^{-\varpi_{2} \left(\frac{D^2}{8} t + \frac{D^2}{8}\right)^{\frac{\alpha}{2}}} \left( \frac{\sqrt{t+1}}{2\sqrt{2}D} \right) dt \\
  &= \sum_{i=1}^{n} \omega_i \mathbf{j}_{4}(t_i).
\end{split}
\end{equation}
Similarly, $\mathbb{Q}_{4} $, $\mathbb{K}_{4}$, $\mathbb{C}_4$, and $\mathbb{V}_4$ can be solved in the same way.
The proof is complete.
\end{appendices}

\bibliographystyle{IEEEtran}
\bibliography{ref}

\end{document}